\documentclass[lettersize,onecolumn]{IEEEtran}
\usepackage{cite}
\usepackage{graphicx}
\usepackage{verbatim}
\usepackage{float}
\usepackage{stfloats}
\usepackage{url}
\usepackage{bm}
\usepackage{array}
\usepackage[caption=false,font=normalsize,labelfont=sf,textfont=sf]{subfig}
\usepackage{textcomp}
\usepackage{amsmath}
\usepackage{amsthm} 
\usepackage{amsxtra}
\usepackage{mathrsfs} 
\usepackage{amssymb} 
\usepackage{amsfonts} 
\usepackage{algorithm}
\usepackage{algorithmic}
\usepackage{setspace}
\usepackage{multirow}
\usepackage[T1]{fontenc}
% T1 fonts will be used to generate the final print and online PDFs,
% so please use T1 fonts in your manuscript whenever possible.
% Other font encondings may result in incorrect characters.
%
\usepackage{graphicx}
\newtheorem{definition}{Definition}
\newtheorem{theorem}{Theorem}

\newtheorem{lemma}{Lemma}
 
\hyphenation{op-tical net-works semi-conduc-tor IEEE-Xplore}
% updated with editorial comments 12/8/2023, 8/9/2021
\begin{document}
	
	\title{A Construction of Evolving $k$-threshold Secret Sharing Scheme over A Polynomial Ring}
	
	\author{Qi Cheng, Hongru Cao, Sian-Jheng Lin,~\IEEEmembership{Member,~IEEE}, and Nenghai Yu,~\IEEEmembership{Member,~IEEE}
	\thanks{This work was supported by the National Natural Science Foundation of China under Grant 62071446. (Corresponding author: Sian-Jheng Lin.)}
  \thanks{Qi Cheng, Hongru Cao, Sian-Jheng Lin and Nenghai Yu are with the CAS Key Laboratory of Electromagnetic Space Information, School of Cyber Science and Technology, University of Science and Technology of China, Hefei 230027, China (e-mail: \{cxiaoq,chrkeith\}@mail.ustc.edu.cn; \{sjlin,ynh\}@ustc.edu.cn).}
}	
	\maketitle
	
	\begin{abstract}
		The threshold secret sharing scheme allows the dealer to distribute the share to every participant such that the secret is correctly recovered from a certain amount of shares. The traditional $(k, n)$-threshold secret sharing scheme requests that the number of participants $n$ is known in advance. In contrast, the evolving secret sharing scheme allows that $n$ can be uncertain and even ever-growing. In this paper, we consider the evolving secret sharing scenario. Using the prefix codes and the properties of the polynomial ring, we propose a brand-new construction of evolving $k$-threshold secret sharing scheme for an $\ell$-bit secret over a polynomial ring, with correctness and perfect security. The proposed schemes establish the connection between prefix codes and the evolving schemes for $k\geq2$, and are also first evolving $k$-threshold secret sharing schemes by generalizing Shamir's scheme onto a polynomial ring. Specifically, the proposal also provides an unified mathematical decryption for prior evolving $2$-threshold secret sharing schemes. Besides, the analysis of the proposed schemes show that the size of the $t$-th share is $(k-1)(\ell_t-1)+\ell$ bits, where $\ell_t$ denotes the length of a binary prefix code of encoding integer $t$. In particular, when $\delta$ code is chosen as the prefix code, the share size achieves $(k-1)\lfloor\lg t\rfloor+2(k-1)\lfloor\lg ({\lfloor\lg t\rfloor+1}) \rfloor+\ell$, which improves the prior best result $(k-1)\lg t+6k^4\ell\lg{\lg t}\cdot\lg{\lg {\lg t}}+ 7k^4\ell\lg k$, where $\lg$ denotes the binary logarithm. When $k=2$, the proposed scheme also achieves the minimal share size for single-bit secret, which is the same as the best known scheme.
	\end{abstract}
	
	\begin{IEEEkeywords}
		Threshold secret sharing, evolving, prefix code, polynomial ring, share size, security.
	\end{IEEEkeywords}
	
	\section{Introduction}
	\IEEEPARstart{T}{he} $(k, n)$ secret sharing scheme encodes the secret to $n$ shares such that the secret can be losslessly recovered from any $k$ out of $n$ shares, and any $k-1$ shares cannot decode any information for the secret. Shamir~\cite{shamir1979share} and Blakley~\cite{blakley1979safeguarding} independently proposed the $(k,n)$ secret sharing scheme in 1979. Based on Shamir's scheme, a lot of schemes~\cite{fuyou2014randomized,harn2016realizing,harn2010strong,pang2005new,yang2004t} has been proposed. However, the conventional secret sharing schemes require that the dealer shall know the maximal number of participants $n$ in advance. When we allow that the dealer can produce more transparencies later, the maximum of $n$ cannot be determined, and hence most conventional schemes cannot be applied to this scenario.
	
	To solve this issue, Komargodski et al.~\cite{komargodski2016share,komargodski2017share} introduced the evolving $k$-threshold secret sharing scheme. In this scheme, the secret can be recovered from any $k$ out of infinitely many participants, and any $k-1$ shares out of these shares cannot get any information about the secret. Specifically, Komargodski et al.~\cite{komargodski2016share,komargodski2017share} first proposed a construction of evolving $2$-threshold secret sharing scheme based on prefix codes. For an $\ell$-bit secret, it shows that the share size of the $t$-th share is no more than $\lg t+(\ell+1)\lg{\lg t}+4\ell+1$ bits. And the proposed scheme is optimal for $1$-bit secret. Futhermore, Komargodski et al.~\cite{komargodski2016share,komargodski2017share} proposed evolving $k$-threshold secret sharing scheme for arbitrary $k$. The $t$-th share size of the proposed evolving $k$-threshold scheme is $(k-1)\lg t+6k^4\ell\lg{\lg t}\cdot\lg{\lg {\lg t}}+ 7k^4\ell\lg k$.
	
	However, firstly, it is unknown whether there are connections between prefix codes and the evolving $k$-threshold secret sharing schemes when $k>2$. This is left as an open problem in \cite{komargodski2017share}. Secondly, the evolving schemes in~\cite{komargodski2017share}, which are constructed based on the idea of distributing shares on a generational basis, and use an evolving scheme once and Shamir's scheme multiple times, become increasingly complex and even not easy to construct as $k$ increases. The natural idea~\cite{komargodski2017share} for constructing the evolving scheme is based on simulating Shamir's scheme. Komargodski et al~\cite{komargodski2017share}, attempted it but failed. How to design algebraic-oriented constructions is also left as an open problem. Thirdly, the corresponding share size in~\cite{komargodski2017share} is not optimal. D’Arco et al~\cite{DARCO2021149} proposed a new evolving $3$-threshold scheme based on the Chinese Remainder Theorem in order to reduce the share size. They~\cite{DARCO2021149} made the share size of the $t$-th participant close to $\lg t+poly(k)\cdot o(\lg t)$ where $k=3$. However, the scheme can only be used in the case of $k=3$. As for $k\geq4$, there are no related works, to the best of our knowledge.
	%The construction scheme ~\cite{d2021secret}, which achieves a smaller share size, can only be used in the case of $k=3$, 
	%D’Arco et al~\cite{d2021secret} have proposed a new technology to construct the evolving $3$-threshold scheme to achieve a smaller share size. However, the technology can only be applied in the case $k=3$.
	%To the best of our knowledge, when $k\geq4$, there only exists one construction of evolving $k$-threshold scheme~\cite{komargodski2017share} at present.
	
	%On the other hand, Komargodski et al. attempted the natural idea~\cite{komargodski2017share} of constructing evolving scheme by simulating Shamir's scheme, but failed.
	%constructing evolving scheme by simulating Shamir's scheme is a natural idea~\cite{komargodski2017share}.  attempted but failed.
	%Instead they constructed the evolving scheme based on the idea of distributing shares on a generational basis.
	%Then, the schemes proposed in ~\cite{komargodski2017share} is based on the idea of distributing shares on a generational basis. 
	%However, the evolving $k$-threshold secret sharing scheme~\cite{komargodski2017share} becomes increasingly complex and even not easy to construct, as $k$ increasing. The authors in ~\cite{komargodski2017share} also leaved how to design algebraic-oriented constructions as an open problem.
	
	To this end, we propose a brand-new construction of evolving $k$-threshold secret sharing scheme for an $\ell$-bit secret over a polynomial ring based on prefix codes, where the size of the $t$-th shares is determined by the codeword size of encoding $t$. The proposed schemes establish the connection between prefix codes and the evolving schemes for $k\geq2$, and achieve algebraic-oriented constructions by generalizing Shamir's scheme. Then we prove the correctness and security of this scheme, and analyze the corresponding share size. Specifically, we first propose the construction of evolving $2$-threshold secret sharing scheme over $F_2[x]$. We also apply the construction to other complicated binary prefix codes to compare the share size with the scheme~\cite{komargodski2017share}. It shows that the proposed scheme can achieve a smaller size for some binary prefix coding. Second, we extend the scheme to the evolving $k$-threshold secret sharing scheme on $F_2[x]$ for $k\geq3$. The analysis of share size shows that the $t$-th share of the proposed scheme is $(k-1)(\ell_t-1)+\ell$ bits, where $\ell_t$ denotes the length of a prefix code of encoding integer $t$. Specifically, using $\delta$ code~\cite{elias1975universal} as the prefix code, the share size is given by $(k-1)\lfloor\lg t\rfloor+2(k-1)\lfloor\lg ({\lfloor\lg t\rfloor+1}) \rfloor+\ell$, which is smaller than the prior result $(k-1)\lg t+6k^4\ell\lg{\lg t}\cdot\lg{\lg {\lg t}}+ 7k^4\ell\lg k$ in~\cite{komargodski2017share}. Third, considering the secret $s\in\{0,1,\cdots,p-1\}^{\ell}$, we proposed a construction of evolving $k$-threshold scheme over the polynomial ring $F_p[x]$, where $k\geq2$. This can be seen as an extension of the proposed scheme. In addition, we show some $p$-ary prefix codings for positive integers and then apply the construction to these codes.
	
	\subsection{Our Contributions}
	The main contributions of
	this paper are enumerated as follows. 
	\begin{itemize}
		\item[$\bullet$] The proposed schemes successfully establish the connection between prefix codes and evolving $k$-threshold secret sharing for arbitrary $k$, where the size of the $t$-th shares is determined by the codeword size of encoding $t$. It means that there is a corresponding construction of evolving $k$-threshold secret sharing scheme for any given prefix codes. In our view, this result can answer part of the open question raised by~\cite{komargodski2017share}.
		
		\item[$\bullet$] It is well known that the evolving $2$-threshold secret sharing scheme has been studied comprehensively. When $k=2$, the proposed scheme provide an unified mathematical decryption for prior evolving $2$-threshold secret sharing schemes. In addition, the proposed scheme also achieves the minimal share size for single-bit secret, as stated in ~\cite{komargodski2016share,komargodski2017share}.
		
		\item[$\bullet$] The proposed brand-new scheme is more concise. This is also the first evolving $k$-threshold secret sharing scheme by generalizing Shamir's scheme onto a polynomial ring.
	\end{itemize}
	
	\subsection{Related works}
	Shamir~\cite{shamir1979share} and Blakley~\cite{blakley1979safeguarding} independently proposed the $(k,n)$ secret sharing scheme in 1979. The construction of Shamir’s scheme was based on Lagrange interpolation. Blakley's scheme was established by using the property of points in multidimensional space. In Shamir's scheme, the dealer randomly chose a polynomial in $F_p[x]$ of degree less than $k$ such that the constant term of this polynomial is the secret, and $p>n$. Consequently, the $n$ shares, which are the evaluations of the polynomial at $n$ distinct points, are respectively distributed to $n$ participants. Therefore, each share is an element of $F_p$, that can be represented with approximately $\lg n$ bits. Later, Karnin et al.~\cite{karnin1983secret} proved that the share size of Shamir's scheme is optimal when the length of the secret is $\lg n$ bits. And Bogdanov et al.~\cite{bogdanov2016threshold} proved that the share size of Shamir’s secret sharing scheme is optimal for a $1$-bit secret. Due to the simplicity and practicality of Shamir’s scheme, it has been widely used and plays a very important role in modern cryptography. 
	
	Besides, Mignotte~\cite{mignotte1983share} and Asmuth et al.~\cite{asmuth1983modular} respectively gave the new constructions of $(k,n)$-threshold secret sharing scheme based on the Chinese Remainder Theorem for integer rings. Both schemes are similar, however, Asmuth-Bloom's scheme improved security compared with Mignotte's scheme. In Asmuth-Bloom's scheme, $n$ integers that are increasing and pairwise coprime were randomly selected, and the product of any $k$ numbers is greater than the value of the secret. Subsequently, the dealer calculated the remainders that module the secret for $n$ integers separately and sent each participant a share consisting of the corresponding integer and its remainder. Due to the superior computational complexity and efficiency of this scheme, it has become a widely studied and highly regarded secret sharing scheme. Then, many works~\cite{liu2015novel,harn2016realizing,ning2018constructing} utilized the technology of Asmuth-Bloom's scheme to achieve various performances in practical applications. 
	
	The general method to construct secret sharing schemes for any given secret sharing function was presented by Benaloh and Leichter~\cite{benaloh1990generalized}. Pedersen~\cite{pedersen1991non} proposed a more convenient and practical secret sharing method. 
	Another secret sharing scheme was considered~\cite{brickell1989some}. 
	In addition, the secret sharing scheme is widely used in other applications, including threshold cryptography~\cite{gennaro1998simplified} and multiparty computations~\cite{cramer2000general}. In 1985, Chor et al.~\cite{chor1985verifiable}
	proposed the concept of verifiability for the first time and constructed a verifiable secret sharing scheme.
	Simmons~\cite{simmons1990really} described a practical application of the access structure of secret sharing. 
	Moreover, Krawczyk’s
	scheme~\cite{krawczyk1993secret} was constructed to achieve computing efficiency with high probability allowing small statistical errors in security. Ding et al.~\cite{ding2021communication} described a new construction for the communication efficient secret sharing scheme with a small share size to minimize
	the decoding bandwidth. 
	
	Komargodski et al.~\cite{komargodski2016share,komargodski2017share} introduced the evolving $k$-threshold secret sharing scheme. Actually, an evolving scheme~\cite{csirmaz2012line} was presented for a similar scenario before Komargodski et al's scheme. Komargodski et al's scheme also left the possibility of constructing a new scheme for the dynamic threshold access structure. Komargodski et al.~\cite{komargodski2017evolving} gave the construction using the algebraic manipulation detection codes. When $k=2$, except the proposed schemes~\cite{komargodski2016share,komargodski2017share}, some related studies were developed based on the scheme. D’Arco et al.~\cite{d2018equivalence} showed the equivalence between binary prefix codes and evolving $2$-threshold secret sharing schemes. Given a secret with arbitrary length, Okamura et al.~\cite{okamura2020new} first applied the construction of an evolving $2$-threshold secret sharing scheme to more complicated binary codes for positive integers, and analyzed the corresponding share size. Furthermore, based on $D$-ary prefix code for $D\geq 3$, the construction of the evolving $2$-threshold scheme was also proposed~\cite{okamura2020new}.

	\subsection{Organization}
	The rest of the paper is organized as follows. In Section~\ref{sec2}, we review the traditional secret sharing scheme and evolving secret sharing scheme, and show the existing main results of the evolving secret sharing scheme. In Section~\ref{sec3} and  Section~\ref{sec4}, we consider the scenario where the secret $s\in\{0,1\}^{\ell}$. Specifically, in Section~\ref{sec3}, we first propose a new construction of evolving $2$-threshold secret sharing scheme, and elaborate on the security of the scheme. Then the evolving $k$-threshold secret sharing scheme is given in Section~\ref{sec4} for $k\geq 3$. In Section~\ref{sec5}, we provide a construction of evolving $k$-threshold secret sharing scheme considering the secret $s\in\{0,1,\cdots,p-1\}^{\ell}$. Finally, Section~\ref{sec6} concludes the work and discusses the unresolved issues.
	
	\section{Models and Notations}\label{sec2}
	Let $\mathbb{N}^+$ denote the set of positive integers. Let $[n]=\{1,2,\cdots,n\}$ for $n\in \mathbb{N}^+$. Let $|A|$ denote the cardinality of the set $A$. For any $p\in \mathbb{N}^+$, let $F_p$ denote the finite field with $p$ elements. Let $F_p[x]=\{\sum_{j=0}^{N}a_jx^{j}|a_j\in F_p\}$ as the polynomial ring, where $N$ is a finite positive integer. Let $F_p[[x]]:=\{\sum_{j=0}^{\infty}a_jx^{j}|a_j\in F_p\}$. 
	
	\subsection{Secret Sharing Scheme} 
	Let $\mathcal{P}_n=\{P_1,P_2,\cdots,P_n\}$ represent the set of $n$ participants. The power set of $\mathcal{P}_n$ is written as $2^{\mathcal{P}_n}$. The collection $\mathcal{A} \subseteq 2^{\mathcal{P}_n}$ is monotone if for arbitrary $A\in \mathcal{A}$ and $A\subseteq C\in 2^{\mathcal{P}_n}$, it holds that $C\in \mathcal{A}$. The access structure is defined as follows.
	\begin{definition}\label{def1}
		$\mathcal{A} \subseteq 2^{\mathcal{P}_n}$ is called an access structure if $\mathcal{A}$ is a monotone collection of non-empty subsets. The subset in $\mathcal{A}$ is called qualified, and the subset in $2^{\mathcal{P}_n}\setminus\mathcal{A}$ is called unqualified. 
	\end{definition}
	
	\begin{definition}\label{def2}
		For $k,n \in \mathbb{N}^+$ with $k\leq n$, 
		the $(k,n)$-threshold access structure $\mathcal{A}$ is a collection that contains all subsets of size is no less than $k$, i.e
		\begin{equation*}
			\mathcal{A}=\{A\in 2^{\mathcal{P}_n}||A|\geq k\}.	
		\end{equation*}
	\end{definition}
	
	A $(k,n)$-threshold secret sharing scheme requires a secret $s\in S$, a set of $n$ participants and a $(k,n)$ access structure $\mathcal{A}$, where $S$ is the domain of the secret. In the scheme, the dealer distributes the share to every participant such that the shares of any subset in $\mathcal{A}$ can correctly recover the secret, while the shares of any subset not in $\mathcal{A}$ cannot gain any information about the secret. 
	
	We denote by ${Z}^{(s)}_i$ the share of the $i$-th participant, and $B({Z}^{(s)}_i)$ represent the bit length of ${Z}^{(s)}_i$ for $1\leq i\leq n$. Generally, a $(k,n)$-threshold secret sharing scheme consists of a pair of algorithms $(\mathcal{E},\mathcal{R})$, where $\mathcal{E}$ is used to encode the secret $s$ into shares, and $\mathcal{R}$ is used to reconstruct the secret from a subset of shares $B\in 2^{\mathcal{P}_n}$. The following requirements shall be satisfied.
	\begin{itemize}
		\item[$\bullet$] Correctness: For any qualified set $A\in \mathcal{A}$, the algorithm $\mathcal{R}$ can correctly recover $s$ from the shares of the participants in $A$, that is
		\begin{equation}
			P[\mathcal{R}(\{Z^{(s)}_i\}_{P_i\in A},A)=s]=1.
		\end{equation}
		\item[$\bullet$] Secrecy: For any unqualified set $C\in 2^{\mathcal{P}_n}\setminus\mathcal{A}$, there is no information about $s$ leaking to the participants in $C$.
	\end{itemize} 
	In particular, the following conclusion is usually used to verify the security of the secret sharing schemes.
	\begin{lemma}\label{lem1}
		Let $s_0, s_1\in S$ be two different secrets. The scheme is secure if for arbitrary $C\in2^{\mathcal{P}_n}\setminus\mathcal{A}$, the two distributions $(\{Z^{(s_0)}_i\}_{P_i\in C})$ and $(\{Z^{(s_1)}_i\}_{P_i\in C})$ are identical.	
	\end{lemma}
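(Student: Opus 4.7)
The plan is to reduce the statement to the standard information-theoretic formulation of perfect secrecy and then deduce the conclusion by a one-line Bayes computation. Fix any unqualified coalition $C\in 2^{\mathcal{P}_n}\setminus\mathcal{A}$, let $S$ denote the secret regarded as a random variable with an arbitrary prior on the domain $S$, and let $V_C=\{Z^{(S)}_i\}_{P_i\in C}$ denote the joint view of $C$, whose randomness comes from $S$ and from the internal coins of the encoding algorithm $\mathcal{E}$.

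First I would pin down the meaning of ``no information about $s$ leaking to the participants in $C$'' from the Secrecy bullet: in the perfect-secrecy setting this is the requirement $\Pr[S=s\mid V_C=v]=\Pr[S=s]$ for every secret $s$ and every view $v$ in the support of $V_C$, equivalently $I(S;V_C)=0$. By Bayes' rule this is in turn equivalent to the conditional distribution $\Pr[V_C=v\mid S=s]$ being independent of $s$.

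Second I would invoke the hypothesis. Since $(V_C\mid S=s_0)$ and $(V_C\mid S=s_1)$ are identically distributed for every pair of secrets, the map $s\mapsto \Pr[V_C=v\mid S=s]$ is constant; averaging against the prior of $S$ gives
\begin{equation*}
\Pr[V_C=v\mid S=s]\;=\;\sum_{s'}\Pr[V_C=v\mid S=s']\Pr[S=s']\;=\;\Pr[V_C=v],
\end{equation*}
so the conditional independence required above holds, and the scheme satisfies the Secrecy condition with respect to $C$. Since $C$ was arbitrary in $2^{\mathcal{P}_n}\setminus\mathcal{A}$, the scheme is secure.

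There is no genuine obstacle here; the only subtlety is fixing the formal interpretation of ``no information leaks'' --- once one adopts the standard $I(S;V_C)=0$ notion (which is the one implicit in the correctness/secrecy pair stated just above the lemma), the argument is a direct application of Bayes' rule and uses no structural property of $\mathcal{E}$ or $\mathcal{R}$. This also explains why the lemma is stated as a sufficient condition and will be used as the workhorse for the security proofs of the later constructions: it replaces reasoning about entropies by the far more tractable task of exhibiting a distribution-preserving bijection between share tuples produced from $s_0$ and those produced from $s_1$.
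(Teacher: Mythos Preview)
Your argument is correct. The paper does not actually supply a proof of this lemma: it is stated as a standard fact (``the following conclusion is usually used to verify the security of the secret sharing schemes'') and then invoked later in the secrecy proofs, so there is nothing to compare against. Your reduction---interpreting ``no information leaks'' as $\Pr[S=s\mid V_C=v]=\Pr[S=s]$ and then using Bayes' rule together with the hypothesis that $\Pr[V_C=v\mid S=s]$ is constant in $s$---is the standard justification and is sound as written.
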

	
	Shamir~\cite{shamir1979share} first proposed a construction for the $(k, n)$-threshold secret sharing scheme. For an $\ell$-bit secret $s$, the scheme uses the polynomial over $F_p$ for $p\geq n$. The share size $Z^{(s)}_i$ satisfies $B(Z^{(s)}_i)\geq \max\{\ell,\lg p\}$ for any $i\in [n]$ in Shamir's scheme.
	
	\subsection{Evolving Secret Sharing Scheme}
	When the maximum of $n$ cannot be determined in advance and can even be infinite, the conventional secret sharing schemes cannot be applied directly. In this case, the evolving secret sharing schemes are developed. We denote $\mathcal{P}=\{P_1,P_2,\cdots\,P_n,\cdots\}$ as the set of participants, where $\mathcal{P}$ is possible infinite. Then, we give some definitions of the evolving secret sharing scheme. 
	
	\begin{definition}\label{def3}
		$\mathcal{A} \subseteq 2^{\mathcal{P}}$ is called an evolving access structure if $\mathcal{A}$ is a monotone collection of non-empty subsets and the collection $\mathcal{A}_t \colon=\mathcal{A}\cap\{P_1,P_2,\cdots,P_t\}$ is an access structure for any $t\in \mathbb{N}^+$.
	\end{definition}
	
	\begin{definition}\label{def4}
		For $k \in \mathbb{N}^+$, 
		the evolving $k$-threshold access structure $\mathcal{A}$ is a collection that contains all subsets of size is no less than $k$, i.e
		\begin{equation*}
			\mathcal{A}=\{A\in 2^{\mathcal{P}}||A|\geq k\}.	
		\end{equation*}
	\end{definition}
	
	Similarly, an evolving $k$-threshold secret sharing scheme requires a secret $s\in S$, a set of participants and an evolving $k$-threshold access structure $\mathcal{A}$. In the scheme, the secret can be recovered from any $k$ out of infinitely many participants, and any $k-1$ shares cannot deduce any information about the secret. An evolving $k$-threshold secret sharing scheme also includes a pair of algorithms $(\mathcal{E},\mathcal{R})$, which satisfy the following requirements.  
	\begin{itemize}
		\item[$\bullet$]Composition: For any $j\in \mathbb{N}^+$, the share $Z^{(s)}_j$ of $j$-th participant is constructed by the previous $j-1$ shares $\{Z^{(s)}_i\}_{i=1}^{j-1}$ using the algorithm $\mathcal{E}$, i.e.  
		\begin{equation}
			Z^{(s)}_i=\mathcal{E}(s,\{Z^{(s)}_i\}_{i=1}^{j-1}).
		\end{equation}
		
		\item[$\bullet$]Correctness: For any $t\in \mathbb{N}^+$, $A\in \mathcal{A}_t$, the algorithm $\mathcal{R}$ can correctly recover the secret $s$ from the shares of the participants in $A$.
		
		\item[$\bullet$]Secrecy: For any $t\in \mathbb{N}^+, C\in 2^{\mathcal{P}_t}\setminus\mathcal{A}_t$, there is no information about $s$ leaking to the participants in $C$.    
	\end{itemize} 
	
	Komargodski et al.~\cite{komargodski2016share,komargodski2017share} first constructed an evolving $k$-threshold secret sharing scheme for an ${\ell}$-bit secret $s$, and obtained two main results, which are described as follows. 
	
	\begin{theorem}
		For $\ell,t\in \mathbb{N}^+$, there exists an evolving $2$-threshold secret sharing scheme, where the size of the $t$-th share satisfies
		\begin{equation}
			B(Z^{(s)}_t)\leq \lg t+(\ell+1)\lg{\lg t}+4\ell+1.
		\end{equation}
	\end{theorem}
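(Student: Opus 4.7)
The plan is to group the infinitely many participants into geometric \emph{generations} $G_1,G_2,\dots$ with $|G_g|\approx 2^{g-1}$, so that $P_t$ lies in generation $g(t)=\Theta(\lg t)$. Inside each generation $g$ I would run an independent copy of Shamir's $2$-out-of-$|G_g|$ scheme on $s$ over a field $F_{q_g}$ of size $\max(2^\ell,|G_g|+1)$; this handles the case where the two reconstructing participants belong to the same generation and contributes roughly $\max(\ell,\lg|G_g|)$ bits to each share.

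To cover the case where two participants lie in different generations $g_i<g_j$, I would attach to every share in generation $g$ a ``bridge'' bundle, one block per earlier generation $g'<g$, whose purpose is to let its owner simulate a second Shamir share inside $G_{g'}$ and then interpolate against any genuine share from $G_{g'}$. Concretely, for each $g'<g$ I would sample a fresh $\ell$-bit mask and a fresh evaluation point, combine them with the generation-$g'$ line, and store the result. Indices into this bundle would be written with an Elias $\gamma$- or $\delta$-style prefix code, so that addressing costs only $O(\lg g)=O(\lg\lg t)$ bits per previous generation rather than $O(\lg t)$.

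Correctness would follow by a case split: same-generation pairs interpolate directly via Shamir, while cross-generation pairs use the later participant's bridge bundle to manufacture a second point on the earlier generation's line. Secrecy I would check via Lemma~\ref{lem1}: each share is a deterministic function of $s$ together with independent fresh randomness (the generation polynomials and the masks), so its marginal distribution is invariant under changing $s$, and any two candidate secrets induce identical single-share distributions. The main obstacle, and where most of the technical work would lie, is the share-size bookkeeping: a naive bridge bundle would store an $\ell$-bit block per previous generation and blow up to $\Theta(\ell\lg t)$ bits per share, whereas the stated bound $\lg t+(\ell+1)\lg\lg t+4\ell+1$ forces the $\ell$-bit randomness to be amortized globally so that only a prefix-code pointer of length $O(\lg\lg t)$ is stored per previous generation. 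Showing that this aggressive compression still decodes from any two shares and leaks nothing from any single one is the heart of the argument.
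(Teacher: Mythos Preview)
This theorem is not proved in the present paper at all: it is quoted verbatim from Komargodski--Naor--Yogev as a prior result (see the sentence immediately preceding the two theorems), and the paper's own contribution is the different, algebraic construction in Sections~\ref{sec3}--\ref{sec5}. So there is no ``paper's own proof'' to compare against; but your proposal can still be assessed against the construction that actually achieves this bound.

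Your outline is the generational scheme with cross-generation bridge bundles. That is the Komargodski--Naor--Yogev mechanism for \emph{general} $k$, i.e.\ the construction behind Theorem~2, not Theorem~1. For $k=2$ the bound $\lg t+(\ell+1)\lg\lg t+4\ell+1$ is obtained by a completely different and much simpler scheme with no generations and no bridges: one takes a prefix code $\sigma$ for the integers, associates a single infinite stream of fresh random bits to the nodes of the code tree (equivalently, in this paper's algebraic language, a single random power series $r_0$), and the $t$-th share is a window of length $\ell_t$ of that stream perturbed by $s$ at positions dictated by the codeword $\sigma(t)$. Two shares recover $s$ because prefix-freeness forces the two codewords to differ at some coordinate within the shorter length; one share alone is a one-time pad. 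The share size is then exactly the codeword length $\ell_t$ (plus $\ell-1$ for an $\ell$-bit secret), and choosing an Elias-type code gives the stated bound directly.

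Your own last paragraph pinpoints why the generational route does not reach this bound: a bridge block per earlier generation costs $\Theta(\ell\cdot\#\text{generations})=\Theta(\ell\lg t)$ bits, the wrong leading term. You then assert that some ``aggressive compression'' must repair this and call that ``the heart of the argument,'' but you do not supply it. In fact there is no compression of the generational scheme that produces the $\lg t+(\ell+1)\lg\lg t$ shape; that shape comes from abandoning generations entirely in favour of the one-stream prefix-code construction. As written, the proposal is a sketch of Theorem~2 specialised to $k=2$, with the crucial step for Theorem~1 left as an unfilled promise.
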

	
	\begin{theorem}
		For $\ell,k, t\in \mathbb{N}^+$, there exists an evolving $k$-threshold secret sharing scheme, where the size of the $t$-th share satisfies
		\begin{equation}
			B(Z^{(s)}_t)\leq (k-1)\lg t+6k^4\ell\lg{\lg t}\cdot\lg{\lg {\lg t}}+ 7k^4\ell\lg k.
		\end{equation}
	\end{theorem}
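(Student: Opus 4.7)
The plan is to combine Shamir's $(k,n)$-threshold scheme with the evolving $2$-threshold scheme of Theorem 1 via a generational partition. I would partition the participants into generations $G_1, G_2, \ldots$ whose sizes grow geometrically, so that the $t$-th participant lies in a generation of index $g(t) = O(\lg t)$. Any $k$-subset of participants falls into one of finitely many ``profiles'' describing how it is distributed across generations, and the construction is designed so that every such profile triggers a valid reconstruction path.

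Within each generation $G_g$, I would run, for every threshold $j \in \{1, \ldots, k\}$, an instance of Shamir's $(j, |G_g|)$-threshold scheme over a field $F_{q_g}$ with $q_g > |G_g|$, yielding several ``partial'' Shamir shares per participant. Across generations, I would share an auxiliary secret $w$ that encodes $s$ via the evolving $2$-threshold scheme of Theorem 1; the $g$-th evolving share then plays the role of the generation-level secret fed into the $(j, |G_g|)$ Shamir scheme inside $G_g$ for the appropriate $j$. A $k$-subset that splits as $k_1 + k_2 = k$ across two generations can then reconstruct the two $2$-threshold shares from the $k_1$-th and $k_2$-th Shamir layers, which in turn recovers $w$ and hence $s$.

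To handle splits across three or more generations I would iterate this construction: replace the evolving $2$-threshold scheme above by a recursively constructed evolving $(k-1)$-threshold scheme acting on the generational representatives, yielding an induction on $k$ whose base case is Theorem 1. Security would be checked via Lemma 1, showing that for any $(k-1)$-subset of participants the joint distribution of shares is identical for any two secrets; this follows from the perfect secrecy of Shamir with threshold $k$ inside each generation, combined with the inductively assumed secrecy of the evolving $(k-1)$-threshold scheme across generations.

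The main obstacle is the bookkeeping required to accommodate every possible split of $k$ participants over arbitrarily many generations without inflating the share size, and to propagate the security argument cleanly through the recursion on $k$. For the size bound, the $t$-th participant holds $k-1$ Shamir shares of size $O(\lg |G_{g(t)}|) = O(\lg t)$, producing the $(k-1)\lg t$ term, together with evolving $(k-1)$-threshold shares of an auxiliary secret of total size $\mathrm{poly}(k) \cdot \ell$ bits; unrolling the recursion and invoking Theorem 1 on this auxiliary secret yields the additive overhead $O(k^4 \ell \lg\lg t \cdot \lg\lg\lg t + k^4 \ell \lg k)$ claimed in the bound.
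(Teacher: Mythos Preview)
Your proposal sketches essentially the generational construction of Komargodski, Naor, and Yogev, which is indeed the argument behind this bound. However, the paper you are reading does \emph{not} prove this theorem at all: Theorem~2 is stated in Section~\ref{sec2} purely as a cited result from~\cite{komargodski2016share,komargodski2017share}, summarizing prior work before the authors present their own, different construction. There is no proof of Theorem~2 in this paper to compare against.

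The paper's actual contribution is a separate construction (Sections~\ref{sec3}--\ref{sec5}) over the polynomial ring $F_p[x]$ that \emph{improves} on the bound of Theorem~2, achieving share size $(k-1)(\ell_t-1)+\ell$ via a direct Shamir-style polynomial evaluated at prefix-code points. So while your recursive generational sketch is the right idea for the Komargodski et al.\ result being quoted, it is orthogonal to what this paper is doing; you have reconstructed the proof of a theorem the authors merely cite as a baseline.
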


	\section{Evolving $2$-threshold Secret Sharing Scheme over $F_2[x]$}\label{sec3}
	In this section, we propose an evolving $2$-threshold secret sharing scheme for an $\ell$-bit secret $s$. The scheme allows the dealer to distribute the share to every participant such that no less than two shares can recover $s$ and any single share cannot construct $s$.
	
	\subsection{Proposed Scheme}
	Given a set of prefix codes for integers, the codeword of the integer $i$ is denoted as $c_i=(c_{i,0},c_{i,1},\cdots,c_{i,\ell_i-1})$, where $\ell_i$ represents the length of $c_i$ with $i\in \mathbb{N}^+$. The polynomial form of $c_i$ is defined as $y_i=\sum_{j=0}^{\ell_i-1}c_{i,j}x^j\in F_2[x]$. Then the $i$-th share with the algorithm $\mathcal{E}$ is defined as
	\begin{equation}\label{key105}
		Z^{(s)}_i=r_0+sy_i \pmod {x^{\ell_i+\ell-1}},
	\end{equation}
	where $r_0$ is randomly chosen in $F_2[[x]]$. Notably, $s$ in (\ref{key105}) uses the polynomial form, which would be the default form throughout this paper. As the degree of $r_0$ is infinite in \eqref{key105}, we cannot choose an $r_0\in F_2[[x]]$ in practice. However, due to the operation modulo $x^{\ell_i+\ell-1}$ in \eqref{key105}, the dealer only needs to choose the part of $r_0$ with degree less than $\ell_i+\ell-1$. 
	
	For any two participants $P_i$, $P_j\in \mathcal{P}$ with $i<j$, let $L_{i,j}$ denote the maximal integer satisfying 
	$x^{L_{i,j}}\mid (y_i-y_j)$. The algorithm $\mathcal{R}$ finds out that the following equation 
	\begin{equation}\label{key106}
		s(y_i-y_j)=Z^{(s)}_i-Z^{(s)}_j \pmod {x^{\ell_i+\ell-1}}
	\end{equation}
	has a unique solution of $s$ in $F_2[x]/x^\ell$ as 
	\begin{equation}\label{key1061}
		s=\frac{(Z^{(s)}_i-Z^{(s)}_j)/x^{L_{i,j}}}{(y_i-y_j)/x^{L_{i,j}}}\pmod {x^\ell}.
	\end{equation}
	The existence and uniqueness of $s$ will be provided in the proof of correctness in next subsection. 
	
	Notably, when the dealer generates the $i$-th share via \eqref{key105}, where $i$ is sufficient large, the dealer needs to construct the corresponding $r_0$. To solve the issue, we provide an algorithm as follows. First, the dealer distributes secret $s$ to the participants in the group. When a new participant joins the group, the dealer then generates and assigns new shares for $s$ to the new participant. Based on this, we suppose that the group have $j\geq 2$ participants initially. The dealer first randomly chooses a suitable polynomial $r_0$ such that the degree is no less than $\ell_j+\ell-2$. According to the selected $r_0$, the dealer distributes the secret $s$ to each participant among $j$ participants via \eqref{key105}. After completing this distribution, the dealer will discard the corresponding $r_0$ and $s$. Therefore, when a new participant $P_{j+1}$ joins the group, the dealer needs to obtain the value of $r_0$ and reconstruct $s$. The dealer first needs to reconstruct all coefficients of $r_0$ with the degree is no more than $\ell_j+\ell-2$, then the dealer randomly chooses the coefficients of $r_0$ such that the degree from $\ell_j+\ell-1$ adding to $\ell_{j+1}+\ell-2$. Next, we show the algorithm to reconstruct the corresponding coefficients of $r_0$ with the degree less than $\ell_{j}+\ell-1$. For the evolving $2$-threshold secret sharing scheme, using known $\{Z^{(s)}_i\}_{i=j-1}^{j}$ and $\{y_i\}_{i=j-1}^{j}$, the dealer can reconstruct the secret $s$ by calculating \eqref{key1061}, then reconstructs the coefficients of $r_0$ with degree less than $\ell_{j}+\ell-1$ as
	\begin{equation}
		r_0=Z^{(s)}_j+sy_j \pmod {x^{\ell_j+\ell-1}}.
	\end{equation}
	Finally, the dealer randomly chooses $\ell_{j+1}-\ell_{j}$ coefficients such that the degree of $r_0$ is $\ell_{j+1}+\ell-2$, then the corresponding $r_0$ is obtained. When another new participant joins the group, the corresponding $r_0$ can aslo be obtained using the above similar method.
	
	\subsection{Proofs of Correctness and Secrecy}
	In this subsection, we will prove the correctness and secrecy of the proposed scheme. Before that, we first emphasize several lemmas, which provide useful results for the proofs. 
	\begin{lemma}\label{lem2}
		For a finite field $F_p$, let $f(x), g(x), h(x), k(x)\in F_p[x]$ with $f(x)\neq 0$ satisfy the congruence equation	
		\begin{equation}\label{key107}
			f(x)g(x) \equiv f(x)h(x) \pmod {f(x)k(x)},
		\end{equation}
		then we have 
		\begin{equation}
			g(x) \equiv h(x) \pmod {k(x)}. 
		\end{equation}	
	\end{lemma}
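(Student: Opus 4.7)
The plan is to translate the congruence into an explicit divisibility statement, then exploit the fact that $F_p[x]$ is an integral domain in order to cancel the common factor $f(x)$.

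First I would unfold the definition of congruence in \eqref{key107}: the hypothesis is equivalent to saying that $f(x)k(x)$ divides $f(x)g(x)-f(x)h(x)$ in $F_p[x]$, that is, there exists some $q(x)\in F_p[x]$ with
\begin{equation*}
f(x)\bigl(g(x)-h(x)\bigr) \;=\; f(x)k(x)\,q(x).
\end{equation*}
At this point the proof reduces to a cancellation argument. Since $F_p$ is a field, the polynomial ring $F_p[x]$ is an integral domain, so it has no nonzero zero divisors. Because $f(x)\neq 0$ by hypothesis, I can cancel $f(x)$ from both sides of the displayed equality and conclude $g(x)-h(x)=k(x)\,q(x)$, i.e.\ $k(x)\mid g(x)-h(x)$, which is precisely $g(x)\equiv h(x)\pmod{k(x)}$.

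The only step that requires any care is justifying the cancellation, and this is really where the hypothesis $f(x)\neq 0$ is used; if we worked in a ring with zero divisors, the cancellation would fail and the lemma would be false. I do not anticipate any genuine obstacle, since the result is essentially the standard cancellation law for congruences in an integral domain specialised to $F_p[x]$; the lemma is stated here just to reuse it cleanly when establishing correctness and uniqueness for \eqref{key1061}.
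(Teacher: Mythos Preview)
Your proposal is correct and follows essentially the same approach as the paper: unfold the congruence to get $f(x)(g(x)-h(x))=f(x)k(x)q(x)$ for some $q(x)\in F_p[x]$, then cancel $f(x)$ using that $F_p[x]$ is an integral domain to obtain the desired congruence. The paper's proof is identical in structure, differing only in notation (it calls the quotient $h_1(x)$) and in phrasing the cancellation step as ``since $F_p$ is a finite field and $f(x)\neq 0$'' rather than explicitly invoking the integral-domain property.
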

	
	\begin{proof}
		According to the definition of congruence equation in \eqref{key107}, there exists $h_1(x)\in F_p[x]$ such that 
		\begin{equation}\label{key109}
			f(x)g(x)-f(x)h(x)=f(x)k(x)h_1(x),	  
		\end{equation}
		since $F_p$ is a finite field and $f(x)\neq 0$ in $F_p[x]$, we can further get
		\begin{equation}\label{key110}
			g(x)-h(x)=k(x)h_1(x).
		\end{equation}
		Therefore, we have 
		\begin{equation}\label{key111}
			g(x) \equiv h(x) \pmod {k(x)}.
		\end{equation}  
	\end{proof}
	
	\begin{lemma}\label{lem3}
		For any $n,k\in \mathbb{N}^+$, let $f(x)=a_{n}x^n+a_{n-1}x^{n-1}+\cdots+a_1 x+a_0\in F_p[x]$. If $a_0\neq 0$, then $f(x)$ is invertible over $F_p[x]/x^k$.
	\end{lemma}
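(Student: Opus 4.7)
The plan is to construct the inverse of $f(x)$ in $F_p[x]/x^k$ explicitly by solving for its coefficients. I would look for $g(x) = b_0 + b_1 x + \cdots + b_{k-1}x^{k-1}$ satisfying $f(x)g(x) \equiv 1 \pmod{x^k}$. Expanding the product and matching coefficients of $x^0, x^1, \ldots, x^{k-1}$ on both sides yields a triangular system: $a_0 b_0 = 1$ together with $\sum_{i+j=m} a_i b_j = 0$ for $m = 1, \ldots, k-1$ (where we set $a_i = 0$ for $i > n$). Because $a_0 \neq 0$ and $F_p$ is a field, $a_0$ is invertible, so the system can be solved recursively by $b_0 = a_0^{-1}$ and $b_m = -a_0^{-1}\sum_{i=1}^{m} a_i b_{m-i}$. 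This produces a well-defined $g(x) \in F_p[x]/x^k$ with $f(x)g(x) \equiv 1 \pmod{x^k}$, establishing invertibility.

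An equivalent and perhaps more elegant approach would be to observe that $a_0 \neq 0$ means $x \nmid f(x)$ in $F_p[x]$. Since $x$ is the only irreducible factor of $x^k$, this forces $\gcd(f(x), x^k) = 1$ in the principal ideal domain $F_p[x]$. By B\'ezout's identity, there exist $u(x), v(x) \in F_p[x]$ such that $u(x)f(x) + v(x)x^k = 1$, which immediately yields $u(x)f(x) \equiv 1 \pmod{x^k}$, so $u(x) \bmod x^k$ is the desired inverse.

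There is no serious obstacle here; the lemma is a standard fact about units in the truncated polynomial ring (or equivalently the formal power series ring $F_p[[x]]$, whose units are exactly those series with nonzero constant term). The only subtlety to mention is that one must invoke $a_0$ being a \emph{unit} in $F_p$ rather than merely nonzero, which is automatic because $F_p$ is a field, making both the recursion and the B\'ezout argument go through cleanly.
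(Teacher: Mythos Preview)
Your proposal is correct. Your second approach---showing $\gcd(f(x),x^k)=1$ because $x\nmid f(x)$, then invoking B\'ezout's identity to obtain $u(x)f(x)+v(x)x^k=1$---is exactly the argument the paper gives (it phrases the existence of $u,v$ as ``by the Euclidean algorithm''), so there is nothing to add there.

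Your first approach, solving the triangular system $a_0b_0=1$, $b_m=-a_0^{-1}\sum_{i=1}^m a_ib_{m-i}$, is a genuinely different route. It is more elementary in that it avoids any appeal to $F_p[x]$ being a PID or to B\'ezout, and it has the added benefit of producing the inverse explicitly (indeed, it is the standard proof that a formal power series is a unit iff its constant term is). The paper's gcd argument is shorter to state and makes the algebraic reason for invertibility transparent, but either proof is entirely adequate for this routine lemma.
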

	
	\begin{proof}
		Let $g(x)=x^k$, then we infer that the factor of $g(x)$ must be the form of $x^d$ for any $0\leq d\leq k$. However, since $f(0)=a_0\neq 0$, then $x^d$ is not the factor of $f(x)$. Therefore, we have
		\begin{equation}\label{key112}
			(g(x),f(x))=1.
		\end{equation}
		According to Euclidean algorithm, there exists $u(x), v(x)\in F_p[x]$ such that 
		\begin{equation}\label{key113}
			u(x)f(x)+v(x)x^k=1,
		\end{equation} 
		hence, we further derive
		\begin{equation}\label{key114}
			u(x)f(x)\equiv 1 \pmod {x^k}.
		\end{equation}
		Then $u(x)$ modulo $x^k$ is the inverse of $f(x)$ in $F_p[x]/x^k$.	
	\end{proof}
	
	\begin{lemma}\label{lem4}
		For any $k_1, k_2\in \mathbb{N}^+$ with $k_1\leq k_2$, let $f_1(x), g_1(x), f_2(x), g_2(x)$ be polynomials over $F_p$ satisfying the following congruence equations	
		\begin{equation}\label{key115}
			\begin{aligned}
				\left
				\{
				\begin{array}{c}
					f_1(x)\equiv g_1(x) \pmod {x^{k_1}},\\
					f_2(x)\equiv g_2(x)\pmod {x^{k_2}}.\\
				\end{array} \right.
			\end{aligned}
		\end{equation}
		Then we have 
		\begin{equation}
			f_2(x)-f_1(x)\equiv g_2(x)-g_1(x) \pmod{x^{k_1}}. 
		\end{equation}	
	\end{lemma}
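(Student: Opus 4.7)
The plan is to reduce the second congruence to the coarser modulus $x^{k_1}$ and then subtract the two congruences. First, I would observe that since $k_1 \leq k_2$, we have $x^{k_1} \mid x^{k_2}$ in $F_p[x]$, because $x^{k_2} = x^{k_1} \cdot x^{k_2 - k_1}$. Hence the hypothesis $f_2(x) \equiv g_2(x) \pmod{x^{k_2}}$ means $x^{k_2} \mid (f_2(x) - g_2(x))$, and by transitivity of divisibility this also gives $x^{k_1} \mid (f_2(x) - g_2(x))$, i.e.\
\begin{equation*}
f_2(x) \equiv g_2(x) \pmod{x^{k_1}}.
\end{equation*}

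Next, I would combine this with the first hypothesis $f_1(x) \equiv g_1(x) \pmod{x^{k_1}}$. Since congruence modulo a fixed polynomial in $F_p[x]$ is compatible with addition and subtraction (being the kernel relation of the quotient map $F_p[x] \to F_p[x]/x^{k_1}$), subtracting the first relation from the reduced second one yields
\begin{equation*}
f_2(x) - f_1(x) \equiv g_2(x) - g_1(x) \pmod{x^{k_1}},
\end{equation*}
which is exactly the claim.

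There is essentially no obstacle: the only nontrivial ingredient is the monotonicity of congruences with respect to divisibility of the modulus, and this is where the assumption $k_1 \leq k_2$ is used. I would make this step explicit by writing $f_2 - g_2 = x^{k_2} h$ for some $h \in F_p[x]$ and factoring out $x^{k_1}$, so that the proof reduces to straightforward manipulation of polynomial identities in $F_p[x]$ with no appeal to the deeper structural lemmas (Lemma~\ref{lem2} or Lemma~\ref{lem3}) established earlier.
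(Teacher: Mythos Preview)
Your proof is correct and follows essentially the same approach as the paper: the paper writes $f_1-g_1=x^{k_1}h_1$ and $f_2-g_2=x^{k_2}h_2=x^{k_1}\cdot x^{k_2-k_1}h_2$, then subtracts to obtain the desired congruence modulo $x^{k_1}$, which is exactly the explicit computation you outlined at the end of your proposal.
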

	\begin{proof}
		According to the definition of congruence equations in \eqref{key115}, there exists $h_1(x), h_2(x)\in F_p[x]$ such that 
		\begin{equation}\label{key117}
			f_1(x)- g_1(x)=x^{k_1}h_1(x),	  
		\end{equation}
		and 
		\begin{equation}\label{key118}
			f_2(x)- g_2(x)=x^{k_2}h_2(x)=x^{k_1}h_2(x)x^{k_2-k_1}.	  
		\end{equation}
		Subtracting the equation \eqref{key117} from the equation \eqref{key118}, we can further get
		\begin{equation}\label{key119}
			(f_2(x)-f_1(x))- (g_2(x)-g_1(x))=x^{k_1}(h_2(x)x^{k_2-k_1}-h_1(x)).
		\end{equation}
		From \eqref{key119}, we obtain 
		\begin{equation}\label{key120}
			f_2(x)-f_1(x)\equiv g_2(x)-g_1(x) \pmod {x^{k_1}}.
		\end{equation}  
	\end{proof}
	
	\begin{theorem}\label{thm3}
		For any $\ell,\ell_1,k\in \mathbb{N}^+$, given $f(x), h(x) \in F_p[x]$ with $f(x)\neq 0$, let $\ell_1$ denote the maximal integer such that $x^{{\ell}_1}\mid f(x)$ and $x^{{\ell}_1+1}\ mid f(x)$. Let $g(x)$ be a polynomial over $F_p$ with the degree no more than $\ell-1$, and $g(x)$ satisfy the following congruence equation 	
		\begin{equation}\label{key121}
			f(x)g(x) \equiv h(x) \pmod {x^k}.
		\end{equation} 
		Then if $\ell+\ell_1\leq k$, there exists a unique $g(x)$ with the degree no more than $\ell-1$ satisfying (\ref{key121}).
		%as
		%\begin{equation}
		%g(x) = \frac{h(x)}{f(x)} \pmod {x^\ell}.
		%\end{equation}
	\end{theorem}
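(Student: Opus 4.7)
The plan is to strip off the $x^{\ell_1}$ factor sitting inside $f(x)$, invert what remains modulo a power of $x$ using Lemma~\ref{lem3}, and then close out uniqueness via a degree count that hinges precisely on the hypothesis $\ell+\ell_1\le k$.

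First, I would use the maximality of $\ell_1$ to factor $f(x)=x^{\ell_1}\tilde f(x)$ with $\tilde f(0)\ne 0$. Substituting into (\ref{key121}) yields $x^{\ell_1}\tilde f(x)\,g(x)\equiv h(x)\pmod{x^k}$, which (since a solution $g$ is being assumed) forces $h(x)\equiv x^{\ell_1}\tilde h(x)\pmod{x^k}$ for some $\tilde h(x)\in F_p[x]$. Invoking Lemma~\ref{lem2} to cancel the common factor $x^{\ell_1}$ reduces the congruence to
\begin{equation*}
\tilde f(x)\,g(x)\equiv \tilde h(x)\pmod{x^{k-\ell_1}}.
\end{equation*}
Because $\tilde f(0)\ne 0$, Lemma~\ref{lem3} furnishes an inverse $\tilde f(x)^{-1}$ in $F_p[x]/x^{k-\ell_1}$, so $g(x)$ is pinned down in this quotient ring as $g(x)\equiv \tilde f(x)^{-1}\tilde h(x)\pmod{x^{k-\ell_1}}$.

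To upgrade this from uniqueness in the quotient ring to uniqueness as a polynomial of degree at most $\ell-1$, I would suppose $g_1,g_2$ are two such candidates; then $g_1-g_2\equiv 0\pmod{x^{k-\ell_1}}$ while $\deg(g_1-g_2)\le \ell-1$. The role of the hypothesis $\ell+\ell_1\le k$ shows up exactly here: it gives $\ell-1<k-\ell_1$, and a nonzero polynomial of degree strictly less than $k-\ell_1$ cannot be divisible by $x^{k-\ell_1}$, so $g_1-g_2$ must be identically zero. There is no substantive obstacle beyond carefully tracking which power of $x$ is the modulus at each step --- the proof is essentially a clean composition of Lemmas~\ref{lem2} and~\ref{lem3} followed by a degree comparison, and the whole point of the inequality $\ell+\ell_1\le k$ is to make the degree bound on $g$ strictly smaller than the modulus so that modular equivalence lifts to genuine equality in $F_p[x]$.
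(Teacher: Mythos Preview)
Your proposal is correct and follows essentially the same approach as the paper: factor out $x^{\ell_1}$, cancel via Lemma~\ref{lem2}, and invert the remaining unit via Lemma~\ref{lem3}. The only cosmetic difference is that the paper first reduces the modulus from $x^{k-\ell_1}$ down to $x^{\ell}$ (using $\ell\le k-\ell_1$) and then inverts in $F_p[x]/x^{\ell}$, whereas you invert in $F_p[x]/x^{k-\ell_1}$ and then invoke the degree bound; both routes are equivalent.
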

	
	\begin{proof}
		As $\ell_1$ is the maximal integer such that $x^{\ell_1}\mid f(x)$, then $f(x)$ can be written $f(x)=x^{\ell_1}f_1(x)$, where $f_1(x)=\frac{f(x)}{x^{\ell_1}}\in F_p[x]$ and the constant term of $f_1(x)$ is nonzero. Taking $x^{\ell_1}f_1(x)$ to replace $f(x)$ in (\ref{key121}), then
		\begin{equation}\label{key123}
			x^{\ell_1}f_1(x)g(x) \equiv h(x) \pmod {x^k}.
		\end{equation}
		Thus, we infer that  $x^{\ell_1}$ is a factor of $h(x)$, then 
		\begin{equation}\label{key124}
			x^{\ell_1}f_1(x)g(x) \equiv x^{{\ell}_1}h_1(x)\pmod {x^{\ell_1}x^{k-{\ell}_1}},
		\end{equation} 
		where $h_1(x)=\frac{h(x)}{x^{\ell_1}}\in F_p[x]$. Combining the conclusion of Lemma~\ref{lem2}, we have
		\begin{equation}\label{key125}
			f_1(x)g(x) \equiv h_1(x) \pmod {x^{k-\ell_1}}.
		\end{equation} 
		As $\ell \leq k-\ell_1$, (\ref{key125}) can be simplified into
		\begin{equation}\label{key126}
			f_1(x)g(x) \equiv h_1(x) \pmod {x^{\ell}}.
		\end{equation}
		Since the constant term of $f_1(x)$ is nonzero, by using Lemma~\ref{lem3}, there exists the inverse of $f_1(x)$ in $F_p[x]/x^{\ell}$, then we have
		\begin{equation}\label{key127}
			g(x) \equiv h_1(x){f_1(x)}^{-1}=\frac{h(x)/x^{\ell_1}}{f(x)/x^{\ell_1}} \pmod {x^{\ell}},
		\end{equation}
		therefore, considering in the polynomial ring $F_p[x]/x^{\ell}$, there exists a unique solution for $g(x)$ satisfying (\ref{key121}). %For the convenience, we denote by $g(x)=\frac{h(x)}{f(x)}\pmod {x^{\ell}}$ the solution of (\ref{key121}) throughout this paper.
	\end{proof}
	
	\noindent\textbf{The proof of Correctness.} For $t\in \mathbb{N}^+, A \in\mathcal{A}_t$, then $|A|\geq 2$. We need to show that the $\ell$-bit secret $s$ can be correctly reconstructed by the shares of the participants in $A$. 
	Since the cases of $|A|\geq 2$ include the case of $|A|=2$, we only prove the case of $|A|=2$. 
	
	Without loss of generality, we take two participants $P_i$ and $P_j$ from $A$ with $i<j\leq t$. Since $y_i$ and $y_j$ are the binary prefix codes of $i$ and $j$, $l_i$ and $l_j$ are the code length of $i$ and $j$, thus we have $l_i\leq l_j$. Then, the $i$-th and $j$-th shares are as follows.
	\begin{equation}\label{key128}
		\begin{aligned}
			\left
			\{
			\begin{array}{c}
				Z^{(s)}_i=r_0+sy_i \pmod {x^{\ell_i+\ell-1}},\\
				Z^{(s)}_j=r_0+sy_j \pmod {x^{\ell_j+\ell-1}}.\\
			\end{array} \right.
		\end{aligned}
	\end{equation}
	By Lemma~\ref{lem4}, subtracting the second equation from the first equation in (\ref{key128}), we can further get
	\begin{equation}\label{key129}
		(y_i-y_j)s=Z^{(s)}_i-Z^{(s)}_j \pmod { x^{\ell_i+\ell-1}}.
	\end{equation}
	As $c_i$ is not a prefix of $c_j$, we have 
	\begin{equation}\label{key30}
		y_i-y_j\neq0 \pmod { x^{\ell_i}}.
	\end{equation}
	As $L_{i,j}$ denotes the maximal integer satisfying 
	$x^{L_{i,j}}\mid (y_i-y_j)$, combining the result of (\ref{key30}), we infer  
	\begin{equation}\label{key31}
		L_{i,j}\leq \ell_i-1.
	\end{equation}
	Thus,
	\begin{equation}
		L_{i,j}+\ell\leq \ell_i+\ell-1.
	\end{equation}
	Using the conclusion of Theorem~\ref{thm3} and combining the bit length of $s$ is $\ell$, the congruence equation (\ref{key129}) has a unique solution in $F_2[x]/x^\ell$, which can be calculated as 
	\begin{equation}
		s=\frac{(Z^{(s)}_i-Z^{(s)}_j)/x^{L_{i,j}}}{(y_i-y_j)/x^{L_{i,j}}}\pmod {x^\ell}.
	\end{equation}
	
	\noindent\textbf{The proof of Secrecy.} For any $t\in \mathbb{N}^+, C\in 2^{\mathcal{P}_t}\setminus\mathcal{A}_t$, i.e. $|C|<2$. We need to prove that the secret $s$ is unable to be recovered by the shares in $C$. Since the case of
	$|C|=0$ is trivial, we only prove the case of $|C|=1$ in the following. 
	
	Assume the only element in $C$ as $P_i$ with $i\leq t$.
	For any $s$, we have $Z^{(s)}_i=r_0+sy_i$ in $F_2[X]/(x^{\ell_i+\ell-1})$. As $r_0$ is a random variable uniformly distributed in the additive group $F_2[X]/(x^{\ell_i+\ell-1})$, $Z^{(s)}_i$ is independent from $sy_i$, which makes $Z^{(s)}_i$ independent from $s$. Hence, $Z^{(s)}_i$ is uniformly random in $F_2[X]/(x^{\ell_i+\ell-1})$ for each selection of $s$.
	
	Now, we will provide an example to show the processes of distributing shares and reconstructing secret.
	
	\noindent\textbf{Example.} Given a secret $s=1001$, for the two participants $P_3$, $P_4$, let $101$ and $11000$ be the binary prefix codes of $3$ and $4$, respectively. According to $\ell+\ell_4-1=8$, the algorithm $\mathcal{E}$ randomly chooses a $8$-bit binary string $r_0=10101001$. Then the share of $P_3$ is given by
	\begin{align*}
		Z^{(s)}_3=&r_0+sy_3=(1+x^2+x^4)+(1+x^3)(1+x^2)\nonumber\\
		=&x^3+x^4+x^5 \pmod {x^6},
	\end{align*}
	thus, the $3$-th share $Z^{(s)}_3$ is $000111$.
	
	And the share of $P_4$ is given by
	\begin{align*}
		Z^{(s)}_4=&r_0+sy_4=(1+x^2+x^4+x^7)+(1+x^3)(1+x)\nonumber\\
		=&x+x^2+x^3+x^7 \pmod {x^8},
	\end{align*}
	thus, the $4$-th share $Z^{(s)}_4$ is $01110001$.
	
	Next, we take this example to show how to reconstruct the secret $s$. Let the bit length of $s$ be $4$. Let $101$ and $11000$ be the prefix codes of $3$ and $4$, respectively. The shares of the two participants $P_3$ and $P_4$ are $000111$ and $01110001$, respectively. Then we have 
	\begin{equation*}
		\begin{aligned}
			\left
			\{
			\begin{array}{l}
				Z^{(s)}_3=x^3+x^4+x^5 =r_0+s(1+x^2) \pmod {x^6},\\
				Z^{(s)}_4=x+x^2+x^3+x^7=r_0+s(1+x) \pmod {x^8}.\\
			\end{array} \right.
		\end{aligned}
	\end{equation*}
	Since $\ell+\ell_3-1=6$, $y_3-y_4=x+x^2=x(1+x)$ and the bit length of $s$ is 4, the algorithm $\mathcal{R}$ solves the following equation 
	\begin{align*}
		s(y_3-y_4)=Z^{(s)}_3-Z^{(s)}_4 \pmod {x^6}
	\end{align*}
	with a unique solution, i.e.
	\begin{align*}
		s=&\frac{(Z^{(s)}_3-Z^{(s)}_4)/x}{(y_3-y_4)/x}=\frac{(x+x^2+x^4+x^5+x^7)/x}{(x+x^2)/x} \pmod {x^4}\nonumber\\
		=&\frac{1+x+x^3}{1+x} \pmod {x^4}\nonumber\\
		=&(1+x+x^3)(1+x)^{-1} \pmod {x^4} \nonumber\\
		\overset{(a)}=&(1+x+x^3)(1+x+x^2+x^3)\pmod {x^4} \nonumber\\
		=&1+x^3.
	\end{align*}
	where (a) holds since $(1+x)(1+x+x^2+x^3)=1$ in $F_2[x]/x^4$. Therefore, the algorithm $\mathcal{R}$ outputs $s=1001$, which is correct.
	
	%\IEEEpubidadjcol 换行
	
	\subsection{The Share Size}\label{sub3.3}
	We analyze the share size in the proposed scheme. For the $t$-th participant, the share $Z^{(s)}_t$ can be regarded as a polynomial of $x$ in $F_2$ with the degree no more than $\ell_t+\ell-2$ from (\ref{key105}), where $\ell_t$ is the length of binary prefix code for the positive integer $t$. Then the size of the corresponding share satisfies
	\begin{equation}\label{key1}
		B(Z^{(s)}_t)=\ell_t+\ell-1.	
	\end{equation}
	From (\ref{key1}), we find that it may obtain different share sizes when choosing different binary prefix codes.
	
	Next, we show the corresponding share size by introducing several binary prefix codes. We denote by $L(\cdot)$ the codeword length for encoding $t$ for $t\in\mathbb{N}^+$. Consider $\gamma$ code~\cite{elias1975universal}, a widely used integer universal coding, which is represented by
	\begin{equation*}
		\gamma(t)=B_{U}(\lfloor \lg{t}\rfloor)[t]_2,
	\end{equation*}
	where $B_{U}$ is written as
	\begin{equation*}
		B_{U}(t)=\overbrace{00\cdots0}^{t}
	\end{equation*}
	and $[t]_2$ denotes the binary expression. Then the codeword length of $\gamma(t)$ is given by
	\begin{equation*}
		L(\gamma(t))=2\lfloor\lg t\rfloor+1.
	\end{equation*}
	Therefore, when using $\gamma$ code~\cite{elias1975universal} as the binary prefix code, the size of the $t$-th share satisfies
	\begin{equation}\label{key38}
		B(Z^{(s)}_t)=2\lfloor\lg t\rfloor+\ell.		
	\end{equation}
	
	Except for $\gamma$ code, we also consider another binary prefix coding, $\delta$ coding~\cite{elias1975universal}, which is represented by
	\begin{equation*}
		\delta(t)={\gamma}'(\lfloor \lg{t}\rfloor+1)[t]'_2,
	\end{equation*}
	where ${\gamma}'$ is the variant of $\gamma$ code. It is defined as
	\begin{equation*}
		{\gamma}'(t)=\overbrace{00\cdots0}^{\lfloor \lg{t}\rfloor}1[t]'_2
	\end{equation*}
	where $[t]'_2$ denotes the binary string deleting the most significant bit of $[t]_2$. Then the codeword length of $\delta(t)$ is given by
	\begin{equation*}
		L(\delta(t))=\lfloor\lg t\rfloor+2\lfloor\lg ({\lfloor\lg t\rfloor+1})\rfloor+1 .
	\end{equation*}
	Therefore, if using $\delta$ code as the binary prefix code, the share size of the $t$-th satisfies
	\begin{equation}\label{key39}
		B(Z^{(s)}_t)=\lfloor\lg t\rfloor+2\lfloor\lg ({\lfloor\lg t\rfloor+1})\rfloor+\ell.		
	\end{equation}
	Compared with the scheme~\cite{komargodski2017share}, when $\ell=1$, the result of (\ref{key39}) is approximately equal to the result given by Theorem 1. When $\ell\geq 2$, the result is smaller than the result given by Theorem 1.
	
	\noindent\textbf{Discussion.} We will discuss which encoding method can achieve a lower share size for the two proposed binary prefix codes. Let $f(t)=(38)-(39)$, then we have 
	\begin{equation}\label{key245}
		f(t)=\lfloor\lg t\rfloor-2\lfloor\lg ({\lfloor\lg t\rfloor+1})\rfloor.		
	\end{equation}
	Hence, the problem becomes to compare the relationship between $f(t)$ and $0$. We classify the problem into two cases to discuss, (\romannumeral1) $t=1$; (\romannumeral2) $t\geq 2$.
	
	\textbf{Case 1.} When $t=1$, we can directly calculate $f(1)=0$.
	
	\textbf{Case 2.} When $t\geq2$, we first introduce a unique representation method for $t$. Let $t=2^{2^x+y}+z$, where $x\geq 0$, $0\leq y\leq 2^x-1$ and $0\leq z\leq 2^{2^x+y}-1$. Substituting $t$ by $2^{2^x+y}+z$ in (\ref{key245}), we can simplify (\ref{key245}) further as below
	\begin{equation}\label{key246}
		f(t)=f(x,y,z)=2^x+y-2\lfloor\lg (2^x+y+1)\rfloor,	
	\end{equation}
	furthermore, the value of $f(x,y,z)$ is analyzed as follows. For any $z$ with $0\leq z\leq 2^{2^x+y}-1$, we have 
	\begin{equation}\label{key247}
		f(x,y,z)=
		\begin{aligned}				
			\left\{
			\begin{array}{ll}
				2^x+y-2x &\enspace \textrm{if } 0 \leq y<2^x-1,\\
				2^{x+1}-2x-3 &\enspace\textrm{if } y=2^x-1.\\			
			\end{array} \right.
		\end{aligned}
	\end{equation}
	By analyzing the value of $y$, we further get
	\begin{equation}\label{key248}
		\begin{aligned}				
			\left\{
			\begin{array}{ll}
				f(x,y,z)>0 &\enspace\textrm{if } y=0, x\geq3,\\
				f(x,y,z)=0 & \enspace\textrm{if } y=0, 0< x<3,\\
				f(x,y,z)>0 &\enspace \textrm{if } 0 < y<2^x-1, x\geq 0,\\
				f(x,y,z)>0 & \enspace \textrm{if }  y=2^x-1, x\geq 2,\\
				f(x,y,z)<0 & \enspace\textrm{if }  y=2^x-1, 0\leq x<2.\\			
			\end{array} \right.
		\end{aligned}
	\end{equation}
	Combining the results of Case 1 and Case 2,  we summarize the above results below
	\begin{equation}\label{key249}
		\begin{aligned}				
			\left\{
			\begin{array}{ll}
				f(t)=0 &\enspace\textrm{if } t=1 \textrm{ or } 4\leq t\leq 7\textrm{ or }16\leq t\leq 31,\\
				f(t)<0 &\enspace\textrm{if } 2\leq t\leq 3\textrm{ or }8\leq t\leq 15,\\
				f(t)>0 &\enspace\textrm{if } t\geq 32.\\			
			\end{array} \right.
		\end{aligned}
	\end{equation}
	Therefore, using $\delta$ code can achieve a lower share size than $\gamma$ code when $t\geq 32$. 
	
	\section{Evolving $k$-threshold Secret Sharing Scheme over $F_2[x]$}\label{sec4}
	Based on similar techniques, we extend the prior scheme to evolving $k$-threshold scheme in this section. We will give the construction of the evolving $k$-threshold scheme for an $\ell$-bit secret $s$ on $F_2[x]$, where $k\geq 3$. The scheme allows the dealer to distribute the share to each participant such that only no less than $k$ participants can reconstruct $s$.
	
	\subsection{Proposed Scheme}\label{sec4.1}
	For any $i\in \mathbb{N}^+$, then the $i$-th share with the algorithm $\mathcal{E}$ is defined as
	\begin{equation}\label{key243}
		Z^{(s)}_i=\sum_{j=0}^{k-2}r_j{y_i}^j+s{y_i}^{k-1}\pmod {x^{(\ell_i-1)(k-1)+\ell}},
	\end{equation}
	where $r_j$ is randomly chosen from $F_2[[x]]$ for $0\leq j\leq k-2$. However, we cannot choose a $r_j\in F_2[[x]]$ in practice. Due to the operation modulo $x^{(\ell_i-1)(k-1)+\ell}$ in \eqref{key243}, the dealer only needs to choose a part of $r_j$ with degrees less than $(\ell_i-1)(k-1)+\ell$. 
	
	Given any $k$ participants $P_{i_1}, P_{i_2},\cdots, P_{i_k}\in \mathcal{P}$ with $i_1<\cdots<i_{k}$, let $L_{i_u,i_v}$ denote the maximal integer of $y_{i_u}-y_{i_v}$ satisfying $x^{L_{i_u,i_v}}\mid (y_{i_u}-y_{i_v})$, for any $u,v\in[k]$ with $u\neq v$. Let $\alpha=\min_{m=1}^{k}\{L_{i_m}+\sum_{\substack{1\leq u<v \leq k\\u,v\neq m}} L_{i_u,i_v}\}$,  where $L_{i_m}=(\ell_{i_m}-1)(k-1)+\ell$ for $1\leq m\leq k$.
	The algorithm $\mathcal{R}$ finds out that the following equation  
	\begin{equation}\label{key244}
		s\Big(\prod_{1\leq u<v \leq k} (y_{i_u}-y_{i_v})\Big)=\sum_{m=1}^{k}\prod_{\substack{1\leq u<v \leq k\\u,v\neq m}} (y_{i_u}-y_{i_v})Z^{(s)}_{i_m}\pmod {x^\alpha},
	\end{equation}
	has a unique solution of $s$ in $F_2[x]/x^\ell$ as
	\begin{align}\label{key2441}
		s=\frac{\sum_{m=1}^{k}\prod_{\substack{1\leq u<v \leq k\\u,v\neq m}} (y_{i_u}-y_{i_v})Z^{(s)}_{i_m}/x^{\sum_{1\leq u<v \leq k}L_{i_u,i_v}}}{\prod_{1\leq u<v \leq k} (y_{i_u}-y_{i_v})/x^{\sum_{1\leq u<v \leq k}L_{i_u,i_v}}}\pmod {x^\ell}.	
	\end{align}
	
	Similarly, considering in $F_2[X]/x^{\ell}$, the existence and uniqueness of $s$ will be provided in the proof of correctness in next subsection.
	
	When the dealer generates the $i$-th share via \eqref{key243}, where $i$ is sufficient large, the dealer also needs to construct the corresponding random polynomials $\{r_j\}_{j=0}^{k-2}$. The dealer still first distributes secret $s$ to the participants in the group. When a new participant joins the group, the dealer then generates and assigns new shares for $s$ to the new participant. Without loss of generality, we suppose that there exists $k$ participants in the group initially. The dealer first randomly chooses the suitable random polynomials $\{r_j\}_{j=0}^{k-2}$, where the degree of each $r_j$ is no less than $(\ell_k-1)(k-1)+\ell-1$. Based on these selected $\{r_j\}_{j=0}^{k-2}$, the dealer distributes the secret $s$ to $k$ participants via \eqref{key243}. After completing this distribution, the dealer discards the selected $\{r_j\}_{j=0}^{k-2}$ and $s$. When the new participant joins the group, which is denoted as $P_{k+1}$, to distribute share to $P_{k+1}$ by \eqref{key243}, the corresponding $\{r_j\}_{j=0}^{k-2}$ and $s$ in \eqref{key243} are necessary to know. According to the previous known shares $\{Z^{(s)}_i\}_{i=1}^{k}$ and $\{y_i\}_{i=1}^{k}$, the dealer can reconstruct the secret $s$ by calculating (\ref{key2441}), then calculates the coefficients of $\{r_j\}_{j=0}^{k-2}$ with degree less than $(\ell_k-1)(k-1)+\ell$ by solving the following equations 
	\begin{equation}\label{key3148}
		\begin{aligned}
			\left
			\{
			\begin{array}{c}
				\sum_{j=0}^{k-2}r_j{y_{1}}^j=	Z^{(s)}_1-s{y_{1}}^{k-1}\pmod {x^{(\ell_{1}-1)(k-1)+\ell}},\\
				\sum_{j=0}^{k-2}r_j{y_2}^j=	Z^{(s)}_2-s{y_2}^{k-1}\pmod {x^{(\ell_2-1)(k-1)+\ell}},\\
				\cdots\\
				\sum_{j=0}^{k-2}r_j{y_k}^j=	Z^{(s)}_k-s{y_k}^{k-1}\pmod {x^{(\ell_k-1)(k-1)+\ell}}.\\
			\end{array} \right.
		\end{aligned}
	\end{equation}
	Solving the above equations, there must exist the solution $\{r_j\}_{j=0}^{k-2}$ with the degree less than $(\ell_{k}-1)(k-1)+\ell$ since the initial $\{r_j\}_{j=0}^{k-2}$ that the dealer chooses for the first time can make \eqref{key3148} hold. Then the dealer randomly chooses $(\ell_{k+1}-\ell_{k})(k-1)$ coefficients such that each $r_j$'s degree is $(\ell_{k+1}-1)(k-1)+\ell-1$, then the corresponding $\{r_j\}_{j=0}^{k-2}$ for calculating $Z^{(s)}_{k+1}$ is obtained. For the $m$-th participant for $m>k$, the corresponding $\{r_j\}_{j=0}^{k-2}$ can aslo be obtained by solving the similar equations established by the shares $\{Z^{(s)}_i\}_{i=1}^{m-1}$ and $\{y_i\}_{i=1}^{m-1}$. To avoid writing repetition, the algorithm to reconstruct $\{r_j\}_{j=0}^{k-2}$ won't be described in this paper.
	
	Next, we will discuss the corresponding share size under different binary prefix codes. The share size in the proposed evolving $k$-threshold scheme is analyzed as follows.
	\\
	\noindent\textbf{Share Size.} For the $t$-th participant, the share $Z^{(s)}_t$ is a polynomial of $x$ in $F_2$ and the degree is  $(k-1)(\ell_t-1)+\ell-1$ from (\ref{key243}), where $\ell_t$ is the length of prefix code to encode positive integer $t$. Hence, the share size satisfies
	\begin{equation}
		B(Z^{(s)}_{t})=(k-1)(\ell_t-1)+\ell.	
	\end{equation}
	We have described two binary prefix codes in Subsection~\ref{sub3.3}. If using $\gamma$ code as the prefix code, the corresponding share size satisfies
	\begin{equation}
		B(Z^{(s)}_t)=2(k-1)\lfloor\lg t\rfloor+\ell.		
	\end{equation}
	However, when using $\delta$ code as the prefix code, then the corresponding share size satisfies
	\begin{equation}\label{key3153}
		B(Z^{(s)}_t)=(k-1)\lfloor\lg t\rfloor+2(k-1)\lfloor\lg ({\lfloor\lg t\rfloor+1}) \rfloor+\ell.		
	\end{equation}
	Compared with the scheme~\cite{komargodski2017share}, when $k\geq 3$, the above result of (\ref{key3153}) improves the result of Theorem 2.
	
	\subsection{Comparision}
	In this subsection, we tabulate the share sizes of currently known evolving threshold secret sharing schemes. For the proposed scheme, we show the corresponding share size for using $\delta$ code as the prefix code. In Table~\ref{tab2}, we represent the value of the lowest share size in bold for each case of $k$. When $k=2$, $k=3$ and $k\geq 4$,  the proposed schemes can achieve consistently lower share sizes than the scheme \cite{komargodski2017share}. When $k=3$, the proposed scheme' share size is larger than the result of \cite{DARCO2021149}. However, the result of \cite{DARCO2021149} is optimized only for the case $k=3$.
	\begin{table}[t]
		\renewcommand\arraystretch{1}
		\centering
		\caption{Share sizes of evolving $k$-threshold schemes.}
		\setlength{\tabcolsep}{1.5mm}
		\begin{tabular}{|c|c|c|}
			\hline
			threshold &algorithm &share size\\\hline
			\multirow{2}*{$k=2$}
			&~\cite{komargodski2017share}&$\lg t+(\ell+1)\lg{\lg t}+4\ell+1$\\
			\cline{2-3}
			~& \textbf{ours}& \bm {${\lfloor\lg t\rfloor+2\lfloor\lg ({\lfloor\lg t\rfloor+1})\rfloor+\ell}$}\\
			\hline
			\multirow{3}*{$k=3$}
			&\cite{komargodski2017share}&$2\lg t+486\ell\lg{\lg t}\cdot\lg{\lg {\lg t}}+ 567\ell\lg 3$\\
			\cline{2-3}
			~& \textbf{\cite{DARCO2021149}}& \bm {${\frac{4}{3}\lg t+c(\log_4{\lg t})^2+\lg p(\log_4{\lg t})}$} \\
			\cline{2-3}
			~&ours& $2 \lfloor\lg t\rfloor+4\lfloor\lg ({\lfloor\lg t\rfloor+1}) \rfloor+\ell$\\
			\hline
			\multirow{2}*{$k\geq 4$}
			&~\cite{komargodski2017share}&$(k-1)\lg t+6k^4\ell\lg{\lg t}\cdot\lg{\lg {\lg t}}+ 7k^4\ell\lg k$\\
			\cline{2-3}
			~&\textbf{ours}&\bm {${(k-1)\lfloor\lg t\rfloor+2(k-1)\lfloor\lg ({\lfloor\lg t\rfloor+1}) \rfloor+\ell}$}\\
			\hline				
		\end{tabular}
		\label{tab2}
	\end{table} 
	
	\subsection{Proofs of Correctness and Secrecy}\label{sec4.2}
	The proofs of the correctness and secrecy of the proposed scheme are given as follows.
	
	\noindent\textbf{The proof of Correctness.} For any $t\in \mathbb{N}^+, A \in\mathcal{A}_t$, then $|A|\geq k$, we will prove that the secret $s$ can be correctly recovered by the shares of the participants in $A$. Since the cases of $|A|\geq k$ include the case of $|A|=k$, we only prove the case of $|A|=k$. 
	
	Denote the $k$ elements in $A$ as $P_{i_1}$, $P_{i_2}$, $\cdots$, $P_{i_k}$ with $i_1<\cdots<i_k\leq t$. Since $y_{i_m}$ is the prefix code of $i_m$ and $l_{i_m}$ is the code length of $i_m$, thus $i_m$ is increasing about $m$, where $1\leq m\leq k$. Then the corresponding shares are as follows.
	\begin{equation}\label{key348}
		\begin{aligned}
			\left
			\{
			\begin{array}{c}
				Z^{(s)}_{i_1}=\sum_{j=0}^{k-2}r_j{y_{i_1}}^j+s{y_{i_1}}^{k-1}\pmod {x^{(\ell_{i_1}-1)(k-1)+\ell}},\\
				Z^{(s)}_{i_2}=\sum_{j=0}^{k-2}r_j{y_{i_2}}^j+s{y_{i_2}}^{k-1}\pmod {x^{(\ell_{i_2}-1)(k-1)+\ell}},\\
				\cdots\\
				Z^{(s)}_{i_k}=\sum_{j=0}^{k-2}r_j{y_{i_k}}^j+s{y_{i_k}}^{k-1}\pmod {x^{(\ell_{i_k}-1)(k-1)+\ell}}.\\
			\end{array} \right.
		\end{aligned}
	\end{equation} 
	In order to calculate $s$, we hope to eliminate these elements $r_0,\cdots, r_{k-2}$. Considering each congruence equation in (\ref{key348}), there exists $h_m(x)\in F_2[x]$ such that the equation 
	\begin{equation}\label{key349}
		\sum_{j=0}^{k-2}r_j{y_{i_m}}^j+s{y_{i_m}}^{k-1}-Z^{(s)}_{i_m}=h_m(x)x^{(\ell_{i_m}-1)(k-1)+\ell}	
	\end{equation}
	holds. 
	
	Considering the above equation (\ref{key349}),
	we multiply both sides of the equation by ${\prod_{\substack{1\leq u<v \leq k\\u,v\neq m}}}(y_{i_u}-y_{i_v})$. For the convenience of writing, let $H_{i_m}(x)={\prod_{\substack{1\leq u<v \leq k\\u,v\neq m}}}(y_{i_u}-y_{i_v})$, then the equation (\ref{key349}) becomes 
	\begin{equation}
		H_{i_m}(x)\Big(\sum_{j=0}^{k-2}r_j{y_{i_m}}^j+s{y_{i_m}}^{k-1}-Z^{(s)}_{i_m}\Big)=H_{i_m}(x)h_m(x)x^{(\ell_{i_m}-1)(k-1)+\ell}.	
	\end{equation}
	Performing the above same steps for each congruent equation in (\ref{key348}), then we can obtain
	\begin{equation}\label{key351}
		\begin{aligned}
			\left
			\{
			\begin{array}{c}
				H_{i_1}(x)\Big(\sum_{j=0}^{k-2}r_j{y_{i_1}}^j+s{y_{i_1}}^{k-1}-Z^{(s)}_{i_1}\Big)=H_{i_1}(x)h_1(x){x^{(\ell_{i_1}-1)(k-1)+\ell}},\\
				H_{i_2}(x)\Big(\sum_{j=0}^{k-2}r_j{y_{i_2}}^j+s{y_{i_2}}^{k-1}-Z^{(s)}_{i_2}\Big)=H_{i_2}(x)h_2(x)x^{(\ell_{i_2}-1)(k-1)+\ell},\\
				\cdots\\
				H_{i_k}(x)\Big(\sum_{j=0}^{k-2}r_j{y_{i_k}}^j+s{y_{i_k}}^{k-1}-Z^{(s)}_{i_k}\Big)=H_{i_k}(x)h_k(x)x^{(\ell_{i_k}-1)(k-1)+\ell},\\
			\end{array} \right.
		\end{aligned}	
	\end{equation}
	where $H_{i_m}(x)={\prod_{\substack{1\leq u<v \leq k\\u,v\neq m}}}(y_{i_u}-y_{i_v})$.
	
	Summing all equations in \eqref{key351}, we have
	\begin{align}\label{key3520}
		&\sum_{m=1}^{k}H_{i_m}(x){y_{i_m}}^{k-1}s+\sum_{j=0}^{k-2} \sum_{m=1}^{k}H_{i_m}(x){y_{i_m}}^{j}r_j\nonumber\\
		=& \sum_{m=1}^{k}H_{i_m}(x)Z^{(s)}_{i_m}+\sum_{m=1}^{k}H_{i_m}(x)h_m(x)x^{(\ell_{i_m}-1)(k-1)+\ell}.
	\end{align}
	Consider the coefficient of $s$ as $\sum_{m=1}^{k}H_{i_m}(x){y_{i_m}}^{k-1}$ in $F_2[x]$, which is equal to the following Vandermonde determinant, i.e.
	\begin{equation}\label{key3521}
		\sum_{m=1}^{k}H_{i_m}(x){y_{i_m}}^{k-1}
		\overset{(b)}=\left |\begin{array}{ccccc}
			1 &1   &\cdots &1\\
			y_{i_1} &y_{i_2} &\cdots &y_{i_{k}}\\
			\vdots &\vdots &\ddots &\vdots\\
			{y_{i_1}}^{k-1} & {y_{i_2}}^{k-1} &\cdots &{y_{i_{k}}}^{k-1} \\
		\end{array}\right|=\prod_{1\leq u<v \leq k} (y_{i_u}-y_{i_v}),
	\end{equation}
	where (b) holds since $\sum_{m=1}^{k}H_{i_m}(x){y_{i_m}}^{k-1}$ is equal to the result of expanding the above determinant based on the last row. 
	
	For arbitrary $j$ with $0\leq j\leq k-2$, the coefficient of $r_j$ is $\sum_{m=1}^{k}H_{i_m}(x){y_{i_m}}^{j}$, then we have
	\begin{equation}\label{key3522}
		\sum_{m=1}^{k}H_{i_m}(x){y_{i_m}}^{j}
		\overset{(c)}=\left |\begin{array}{ccccc}
			1 &1   &\cdots &1\\
			y_{i_1} &y_{i_2} &\cdots &y_{i_{k}}\\
			\vdots &\vdots &\ddots &\vdots\\
			{y_{i_1}}^{k-2} & {y_{i_2}}^{k-2} &\cdots &{y_{i_{k}}}^{k-2} \\
			{y_{i_1}}^{j} & {y_{i_2}}^{j} &\cdots &{y_{i_{k}}}^{j} \\
		\end{array}\right|=0,
	\end{equation}
	where (c) holds since $\sum_{m=1}^{k}H_{i_m}(x){y_{i_m}}^{j}$ is equal to the result of expanding the above determinant based on the last row. 
	
	Taking the results of \eqref{key3521} and \eqref{key3522} into \eqref{key3520},  we can further get
	\begin{equation}\label{key352}
		\prod_{1\leq u<v \leq k} (y_{i_u}-y_{i_v})s=\sum_{m=1}^{k}H_{i_m}(x)Z^{(s)}_{i_m}+\sum_{m=1}^{k}H_{i_m}(x)h_m(x)x^{(\ell_{i_m}-1)(k-1)+\ell}.
	\end{equation} 
	Since $L_{i_u,i_v}$ denotes the maximal integer of $y_{i_u}-y_{i_v}$ satisfying $x^{L_{i_u,i_v}}\mid (y_{i_u}-y_{i_v})$ for any $u,v\in[k]$ with  $u\neq v$, and $L_{i_m}=(\ell_{i_m}-1)(k-1)+\ell$ for $1\leq m\leq k$, we can infer that each polynomial
	\begin{equation*}
		H_{i_m}(x)h_m(x)x^{L_{i_m}}={\prod_{\substack{1\leq u<v \leq k\\u,v\neq m}}}(y_{i_u}-y_{i_v})x^{L_{i_m}}h_m(x)
	\end{equation*}
	has the factor $x^\alpha$ for any $m$, where $\alpha=\min_{m=1}^{k}\{L_{i_m}+\sum_{\substack{1\leq u<v \leq k\\u,v\neq m}} L_{i_u,i_v}\}$.
	
	Substituting $H_{i_m}(x)$ by ${\prod_{\substack{1\leq u<v \leq k\\u,v\neq m}}(y_{i_u}-y_{i_v})}$ in \eqref{key352} and simplying the equation further, then we get 
	\begin{equation}\label{key353}
		s\Big(\prod_{1\leq u<v \leq k} (y_{i_u}-y_{i_v})\Big)\equiv \sum_{m=1}^{k}\prod_{\substack{1\leq u<v \leq k\\u,v\neq m}} (y_{i_u}-y_{i_v})Z^{(s)}_{i_m} \pmod {x^{\alpha}}.
	\end{equation}
	We note that the polynomial $\prod_{1\leq u<v \leq k} (y_{i_u}-y_{i_v})$ has the maximal integer $\sum_{1\leq u<v \leq k}L_{i_u,i_v}$, i.e. 
	\begin{equation}
		x^{\sum_{1\leq u<v \leq k}L_{i_u,i_v}}\mid \prod_{1\leq u<v \leq k} (y_{i_u}-y_{i_v}).
	\end{equation}
	If $\sum_{1\leq u<v \leq k}L_{i_u,i_v}+\ell\leq \alpha$, we can use the conclusion of Theorem~\ref{thm3} to construct $s$. Next, our goal is to prove $\sum_{1\leq u<v \leq k}L_{i_u,i_v}+\ell\leq \alpha$.
	
	Without loss of generality, we suppose $\alpha=\min_{m=1}^{k}\{L_{i_m}+\sum_{\substack{1\leq u<v \leq k\\u,v\neq m}}  L_{i_u,i_v}\}=L_{i_q}+\sum_{\substack{1\leq u<v \leq k\\u,v\neq q}} L_{i_u,i_v}$ for some $q$. Since $L_{i_q}=(\ell_{i_q}-1)(k-1)+\ell$, then we have
	\begin{align}\label{key354}
		&\alpha-\sum_{1\leq u<v \leq k}L_{i_u,i_v}-\ell\nonumber\\
		=&L_{i_q}+\sum_{\substack{1\leq u<v \leq k\\u,v\neq q}} L_{i_u,i_v}-\sum_{1\leq u<v \leq k}L_{i_u,i_v}-\ell\nonumber\\
		=&(\ell_{i_q}-1)(k-1)-\sum_{\substack{1\leq u\leq k\\u\neq q}} L_{i_u,i_q}\nonumber\\
		=&\sum_{\substack{1\leq u\leq k\\u\neq q}}( \ell_{i_q}-1-L_{i_u,i_q}).
	\end{align}
	For any $u\neq q$ with $1\leq u\leq k$, since $y_{i_u}$ is the prefix code of $i_u$, then $y_{i_u}-y_{i_q}\neq 0$. On the other hand, as $L_{i_u,i_q}$ is the maximal integer of $y_{i_u}-y_{i_q}$ with $x^{L_{i_u,i_q}}\mid y_{i_u}-y_{i_q}$, and $\ell_{i_q}$ is the prefix codeword length of $y_{i_q}$, then we have $L_{i_u,i_q}\leq \ell_{i_q}-1$. Replacing the result in (\ref{key354}), we get 
	\begin{equation}
		\sum_{1\leq u<v \leq k}L_{i_u,i_v}+\ell\leq \alpha.	
	\end{equation}
	By using the conclusion of Theorem~\ref{thm3}, we can construct the unique solution of $s$ of (\ref{key353}) in $F_2[x]/x^\ell$ as
	\begin{align}\label{key355}
		s=\frac{\sum_{m=1}^{k}\prod_{\substack{1\leq u<v \leq k\\u,v\neq m}} (y_{i_u}-y_{i_v})Z^{(s)}_{i_m}/x^{\sum_{1\leq u<v \leq k}L_{i_u,i_v}}}{\prod_{1\leq u<v \leq k} (y_{i_u}-y_{i_v})/x^{\sum_{1\leq u<v \leq k}L_{i_u,i_v}}}\pmod {x^\ell}.	
	\end{align}
	Therefore, the correctness of the proposed scheme has been proved completely.
	
	In the second part of this subsection, we will demonstrate the security of this scheme. Before proving the security, we first emphasize a theorem, which provides a useful conclusion for proving the security.
	\begin{theorem}\label{thm6}
		Let $i_1,i_2,\cdots,i_{k-1}\in \mathbb{N}^+$ satisfy $i_1<i_2<\cdots<i_{k-1}$, and $\ell_{i_1},\cdots,\ell_{i_{k-1}}\in \mathbb{N}^+$ satisfy $\ell_{i_1}\leq\ell_{i_2}\leq \cdots \leq \ell_{i_{k-1}}$. For any $m\in[k-1]$, let $y_{i_m}\in F_p[x]/x^{l_{i_m}}$ be the given polynomial, and $y_{i_{m_1}}$, $y_{i_{m_2}}$ be pairwisely different for $m_1\neq m_2$. For $\ell\in \mathbb{N}^+$, let $s_0, s_1\in F_p[x]/x^{\ell}$ be two different polynomials. Gvien the $k-1$ polynomials $z_{i_m}\in F_p[x]/x^{L_{i_m}}$ for $1\leq m \leq k-1$, where $L_{i_m}=(\ell_{i_m}-1)(k-1)+\ell$. Let $(r_{0,0}, r_{0,1},\cdots, r_{0,k-2})$ and $(r_{1,0}, r_{1,1},\cdots, r_{1,k-2})$ respectively be the solutions of the following two congruence equations
		\begin{equation}\label{key356}
			\begin{aligned}
				\left
				\{
				\begin{array}{l}
					z_{i_1}=\sum_{j=0}^{k-2}r_{0,j}{y_{i_1}}^j+s_0{y_{i_1}}^{k-1}\pmod {x^{L_{i_1}}},\\
					z_{i_2}=\sum_{j=0}^{k-2}r_{0,j}{y_{i_2}}^j+s_0{y_{i_2}}^{k-1}\pmod {x^{L_{i_2}}},\\
					\hspace{10em}\cdots\hspace{10em}\\
					z_{i_{k-1}}=\sum_{j=0}^{k-2}r_{0,j}{y_{i_{k-1}}}^j+s_0{y_{i_{k-1}}}^{k-1}\pmod {x^{L_{i_{k-1}}}},\\
				\end{array} \right.
			\end{aligned}
		\end{equation}
		and 
		\begin{equation}\label{key357}
			\begin{aligned}
				\left
				\{
				\begin{array}{l}
					z_{i_1}=\sum_{j=0}^{k-2}r_{1,j}{y_{i_1}}^j+s_1{y_{i_1}}^{k-1}\pmod {x^{L_{i_1}}},\\
					z_{i_2}=\sum_{j=0}^{k-2}r_{1,j}{y_{i_2}}^j+s_1{y_{i_2}}^{k-1}\pmod {x^{L_{i_2}}},\\
					\hspace{10em}\cdots\hspace{10em}\\
					z_{i_{k-1}}=\sum_{j=0}^{k-2}r_{1,j}{y_{i_{k-1}}}^j+s_1{y_{i_{k-1}}}^{k-1}\pmod {x^{L_{i_{k-1}}}},\\
				\end{array} \right.
			\end{aligned}
		\end{equation}
		where $r_{h,j} \in F_p[x]/x^{L_{i_{k-1}}}$ for $0\leq h\leq 1$, $0\leq j \leq k-2$.
		
		Then the equations \eqref{key356} and \eqref{key357} have the same number of solutions.	
	\end{theorem}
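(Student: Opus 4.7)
The strategy is to reduce the equality of cardinalities to the existence of a single ``correction tuple'' via additive translation, and then to produce this tuple by a Lagrange-type construction. First, I would observe that if $(r_{0,j})_{j=0}^{k-2}$ solves~\eqref{key356} and $(\Delta_j)_{j=0}^{k-2}\in(F_p[x]/x^{L_{i_{k-1}}})^{k-1}$ satisfies the auxiliary system
\begin{equation*}
\sum_{j=0}^{k-2}\Delta_j y_{i_m}^{j}\equiv(s_0-s_1)\,y_{i_m}^{k-1}\pmod{x^{L_{i_m}}},\quad m=1,\ldots,k-1,
\end{equation*}
then $(r_{0,j}+\Delta_j)_{j=0}^{k-2}$ solves~\eqref{key357}, because the difference between the $m$-th congruences of~\eqref{key356} and~\eqref{key357} is precisely the $m$-th congruence above. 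Symmetrically, $(r_{1,j})\mapsto(r_{1,j}-\Delta_j)$ sends solutions of~\eqref{key357} back to solutions of~\eqref{key356}. So once a single $(\Delta_j)$ satisfying the auxiliary system is produced, the translation $(r_{0,j})\mapsto(r_{0,j}+\Delta_j)$ gives an explicit bijection between the two solution sets, settling the theorem (the argument also correctly handles the ``both sets empty'' case, since a translation is always a bijection on the whole ambient space).

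Therefore the crux is the existence of a solution to the auxiliary system. My plan is to construct one via Lagrange-style interpolation. Concretely, I would consider the formal polynomial in $y$ of degree at most $k-2$,
\begin{equation*}
Q(y)=(s_0-s_1)\sum_{m=1}^{k-1}y_{i_m}^{k-1}\prod_{\substack{1\leq u\leq k-1\\ u\neq m}}\frac{y-y_{i_u}}{y_{i_m}-y_{i_u}},
\end{equation*}
which formally satisfies $Q(y_{i_m})=(s_0-s_1)\,y_{i_m}^{k-1}$, and then define $\Delta_j$ to be the coefficient of $y^j$ in $Q(y)$. Because $y_{i_m}-y_{i_u}$ is divisible by $x^{L_{i_u,i_m}}$ and thus not invertible in $F_p[x]/x^{L_{i_m}}$, I would write $y_{i_m}-y_{i_u}=x^{L_{i_u,i_m}}g_{u,m}(x)$ with $g_{u,m}(0)\neq0$, so that $g_{u,m}$ is a unit in every $F_p[x]/x^N$ by Lemma~\ref{lem3}, and separate out the offending powers of $x$ in the denominator.

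The main obstacle will be the careful bookkeeping of $x$-divisibility: showing that the surplus factors of $x$ in the numerator of each Lagrange summand absorb the $x^{\sum_{u\neq m} L_{i_u,i_m}}$ in its denominator modulo $x^{L_{i_m}}$, so that each $\Delta_j$ is a well-defined element of $F_p[x]/x^{L_{i_{k-1}}}$ and the interpolation identity $Q(y_{i_m})\equiv(s_0-s_1)\,y_{i_m}^{k-1}\pmod{x^{L_{i_m}}}$ holds in the required sense. I expect this to follow by essentially the same analysis used in the correctness proof leading to~\eqref{key355}, combined with the bound $L_{i_u,i_v}\leq\min(\ell_{i_u},\ell_{i_v})-1$ already established there. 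Should this direct manipulation become cumbersome, an equivalent approach is to clear denominators by multiplying the auxiliary system through by $\prod_{1\leq u<v\leq k-1}(y_{i_u}-y_{i_v})$ and then appeal to Theorem~\ref{thm3} to recover the $\Delta_j$'s from the resulting single-polynomial congruence.
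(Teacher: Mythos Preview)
Your high-level strategy coincides with the paper's: both reduce the claim to producing a single correction tuple $(\Delta_j)$ solving the auxiliary system
\[
\sum_{j=0}^{k-2}\Delta_j\,y_{i_m}^{j}\equiv(s_0-s_1)\,y_{i_m}^{k-1}\pmod{x^{L_{i_m}}},\qquad m=1,\ldots,k-1,
\]
and then translate. (The paper phrases the translation as a proof by contradiction, but the content is the same bijection you describe.)

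Where you diverge is in \emph{how} the tuple is built, and the paper's route is markedly simpler. Instead of Lagrange interpolation, the paper expands the monic polynomial $\prod_{m=1}^{k-1}(y-y_{i_m})=y^{k-1}+\sum_{j=0}^{k-2}a_{k-1-j}\,y^{j}$ with $a_i=(-1)^{i}\sigma_i(y_{i_1},\ldots,y_{i_{k-1}})\in F_p[x]$, and reads off the \emph{exact} identity
\[
y_{i_m}^{k-1}=-\sum_{j=0}^{k-2}a_{k-1-j}\,y_{i_m}^{j}\qquad\text{in }F_p[x],
\]
for every $m$. Multiplying by $s_0-s_1$ gives $\Delta_j=(-1)^{k-2-j}(s_0-s_1)\,\sigma_{k-1-j}(y_{i_1},\ldots,y_{i_{k-1}})\in F_p[x]$ outright, so the auxiliary system holds as an equality in $F_p[x]$, not merely modulo powers of $x$. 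No inversions, no $x$-divisibility bookkeeping, no appeal to Lemma~\ref{lem3} or Theorem~\ref{thm3}.

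Your Lagrange formula computes the \emph{same} $Q(y)$ over the fraction field $F_p(x)$, but your plan to justify it summand by summand is where the trouble lies: the individual Lagrange terms are genuinely not in $F_p[x]$. For instance, the $y^{k-2}$-coefficient of the $m$-th summand is $(s_0-s_1)\,y_{i_m}^{k-1}/\prod_{u\neq m}(y_{i_m}-y_{i_u})$, and if $y_{i_m}(0)\neq 0$ while some $L_{i_u,i_m}>0$, this has $x$ in the denominator. Only after summing over $m$ do the poles cancel. So ``showing that the surplus factors of $x$ in the numerator of each Lagrange summand absorb the denominator'' will not go through as stated; you would be forced to analyze the full sum, which amounts to rediscovering the Vieta identity anyway. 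Your fallback of clearing denominators and invoking Theorem~\ref{thm3} is also not quite adapted to the situation, since Theorem~\ref{thm3} solves a single congruence for a single unknown, not a system for the vector $(\Delta_j)$.
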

	\begin{proof}
		The proof is refered to the appendix.	
	\end{proof}
	
	\noindent\textbf{The proof of Secrecy.} For any $t\in \mathbb{N}^+, C \in 2^{\mathcal{P}_t}\setminus\mathcal{A}_t$, we will prove that the secret $s$ is unable to be recovered by the shares of participants in $C$. $C$ is unqualified, then $|C|<k$. Since the cases of $|C|<k$ include the case of $|C|=k-1$, we only prove the case of $|C|=k-1$. Other cases can be proved according to the case of $|C|=k-1$.
	
	Denote the elements in $C$ as $P_{i_1},\cdots, P_{i_{k-1}}$ with $i_1<\cdots<i_{k-1}\leq t$. In order to use the conclusion of Lemma~\ref{lem1} to prove the security of the proposed scheme, we choose any two distinct $s_0, s_1$, let $Z^{(s_0)}_{i_m}$ and $Z^{(s_1)}_{i_m}$ be corresponding the $i_m$-th shares for $1\leq m\leq k-1$. We need to prove the distributions of $\{Z^{(s_0)}_{i_m}=z_{i_m}\}_{m=1}^{k-1}$ and $\{Z^{(s_1)}_{i_m}=z_{i_m}\}_{m=1}^{k-1}$ are identical for any $z_{i_m}\in F_2[x]/x^{L_{i_m}}$, where $L_{i_m}=(\ell_{i_m}-1)(k-1)+\ell$ for $1\leq m\leq k-1$. It is equivalent to prove that the following two probabilities are equal, i.e.
	\begin{equation}\label{key358}
		P(\{Z^{(s_0)}_{i_m}=z_{i_m}\}_{m=1}^{k-1})=P(\{Z^{(s_1)}_{i_m}=z_{i_m}\}_{m=1}^{k-1}).
	\end{equation}
	
	We first analyze the value of $P(\{Z^{(s_0)}_{i_m}=z_{i_m}\}_{m=1}^{k-1})$. Consider the following congruence equations
	\begin{equation}\label{key359}
		\begin{aligned}
			\left
			\{
			\begin{array}{l}
				z_{i_1}=\sum_{j=0}^{k-2}r_{0,j}{y_{i_1}}^j+s_0{y_{i_1}}^{k-1}\pmod {x^{L_{i_1}}},\\
				z_{i_2}=\sum_{j=0}^{k-2}r_{0,j}{y_{i_2}}^j+s_0{y_{i_2}}^{k-1}\pmod{x^{L_{i_2}}},\\
				\hspace{10em}\cdots\hspace{10em}\\
				z_{i_{k-1}}=\sum_{j=0}^{k-2}r_{0,j}{y_{i_{k-1}}}^j+s_0{y_{i_{k-1}}}^{k-1}\pmod {x^{L_{i_{k-1}}}}.\\
			\end{array} \right.
		\end{aligned}
	\end{equation}
	Though $r_{0,j}\in F_2[[x]]$, only the part with degree less than $L_{i_{k-1}}$ participates in the above calculation. Hence, we only need to consider the part of $r_{0,j}$ modulo $x^{L_{i_{k-1}}}$. Therefore, the whole space of the solution vector $(r_{0,0},r_{0,1},\cdots,r_{0,k-2})$ can be regarded as $\{F_2[x]/x^{L_{i_{k-1}}}\}^{k-1}$.
	The value of $P(\{Z^{(s_0)}_{i_m}=z_{i_m}\}_{m=1}^{k-1})$ is equal to the ratio of the number of solution $(r_{0,0},r_{0,1},\cdots,r_{0,k-2})$ of (\ref{key359}) in the whole space $\{F_2[x]/x^{L_{i_{k-1}}}\}^{k-1}$. 
	
	By similar analysis, the value of $P(\{Z^{(s_1)}_{i_m}=z_{i_m}\}_{m=1}^{k-1})$ is equal to the ratio of the number of solution of (\ref{key360}) in the whole space $\{F_2[x]/x^{L_{i_{k-1}}}\}^{k-1}$.
	\begin{equation}\label{key360}
		\begin{aligned}
			\left
			\{
			\begin{array}{l}
				z_{i_1}=\sum_{j=0}^{k-2}r_{1,j}{y_{i_1}}^j+s_1{y_{i_1}}^{k-1}\pmod {x^{L_{i_1}}},\\
				z_{i_2}=\sum_{j=0}^{k-2}r_{1,j}{y_{i_2}}^j+s_1{y_{i_2}}^{k-1}\pmod {x^{L_{i_2}}},\\
				\hspace{10em}\cdots\hspace{10em}\\
				z_{i_{k-1}}=\sum_{j=0}^{k-2}r_{1,j}{y_{i_{k-1}}}^j+s_1{y_{i_{k-1}}}^{k-1}\pmod {x^{L_{i_{k-1}}}}.\\
			\end{array} \right.
		\end{aligned}
	\end{equation}
	By Theorem~\ref{thm6}, the numbers of solutions of the congruence equations (\ref{key359}) and (\ref{key360}) are same, which implies $P(\{Z^{(s_0)}_{i_m}=z_{i_m}\}_{m=1}^{k-1})=P(\{Z^{(s_1)}_{i_m}=z_{i_m}\}_{m=1}^{k-1})$. Then the security of the proposed scheme is completely proved using Lemma~\ref{lem1}.
	
	Now, we will give an example to show how to distribute shares and to restore secret.
	\\
	\noindent\textbf{Example.} In this example, we choose $k=3$ and the secret $s=110$. For three participants $P_2$, $P_3$ and $P_4$, let $100$, $101$ and $11000$ be the prefix codes of $2$, $3$ and $4$, respectively. As $\ell+(k-1)(\ell_4-1)=11$, the algorithm $\mathcal{E}$ randomly chooses two $11$-bit binary strings $r_0=01001101000$ and $r_1=10011001001$. Then the share of $P_2$ is given by
	\begin{align}
		Z^{(s)}_2=&r_0+r_1y_2+s{y_2}^2\nonumber\\
		=&(x+x^4+x^5)+(1+x^3+x^4)+(1+x)\nonumber\\
		=&x^3+x^5 \pmod {x^7},
	\end{align}
	thus, the share $Z^{(s)}_2$ is $0001010$.
	
	We proceed to construct the share $Z^{(s)}_3$. According to the algorithm $\mathcal{E}$, the share of $P_3$ is given by
	\begin{align}
		Z^{(s)}_3=&r_0+r_1y_3+s{y_3}^2\nonumber\\
		=&(x+x^4+x^5)+(1+x^3+x^4)(1+x^2)+(1+x)(1+x^2)^2\nonumber\\
		=&x^2+x^3+x^4+x^5+x^6 \pmod {x^7},
	\end{align}
	thus, the share $Z^{(s)}_3$ is $0011111$.
	
	And the share of $P_4$ is calculated by
	\begin{align}
		Z^{(s)}_4=&r_0+r_1y_4+s{y_4}^2\nonumber\\
		=&(x+x^4+x^5+x^7)+(1+x^3+x^4+x^7+x^{10})(1+x)+(1+x)(1+x)^2\nonumber\\                =&x+x^2+x^4+x^8+x^{10} \pmod {x^{11}},
	\end{align}
	thus, the share $Z^{(s)}_4$ is $01101000101$.
	
	Next, we take this example to restore the secret $s$. As defined above, let $100$, $101$ and $11000$ be the prefix codes of $2$, $3$ and $4$, respectively. Let  $0001010$, $0011111$ and $01101000101$ be the shares of the three participants $P_2$, $P_3$, $P_4$, respectively. Then we have 
	\begin{equation}
		\begin{aligned}
			\left
			\{
			\begin{array}{l}
				Z^{(s)}_2=x^3+x^5=r_0+r_1+s \pmod {x^7},\\
				Z^{(s)}_3=x^2+x^3+x^4+x^5+x^6=r_0+r_1(1+x^2)+s(1+x^2)^2 \pmod {x^7},\\
				Z^{(s)}_4=x+x^2+x^4+x^8+x^{10}=r_0+r_1(1+x)+s(1+x)^2 \pmod {x^{11}}.\\
			\end{array} \right.
		\end{aligned}
	\end{equation}
	We calculate $\alpha=8$ and $(y_2-y_4)(y_2-y_3)(y_3-y_4)=x^4(x+1)$ in $F_2[x]$. Since the bit length of the secret $s$ is $3$, the algorithm $\mathcal{R}$ solves the following equation  
	\begin{align*}
		s\Big(\prod_{1\leq u<v \leq 3} (y_{i_u}-y_{i_v})\Big)=\sum_{m=1}^{3}\prod_{\substack{1\leq u<v \leq 3\\u,v\neq m}} (y_{i_u}-y_{i_v})Z^{(s)}_{i_m} \pmod {x^8}.
	\end{align*}
	with a unique solution, i.e.
	\begin{align*}
		s=&\frac{\sum_{m=1}^{3}\prod_{\substack{1\leq u<v \leq 3\\u,v\neq m}} (y_{i_u}-y_{i_v})Z^{(s)}_{i_m}/x^{\sum_{1\leq u<v \leq 3}L_{i_u,i_v}}}{\prod_{1\leq u<v \leq 3} (y_{i_u}-y_{i_v})/x^{\sum_{1\leq u<v \leq 3}L_{i_u,i_v}}}\pmod {x^3} \nonumber\\
		=&\frac{[Z^{(s)}_2(y_3-y_4)+Z^{(s)}_3(y_2-y_4)+Z^{(s)}_4(y_2-y_3)]/x^4}{(y_2-y_4)(y_2-y_3)(y_3-y_4)/x^4}\pmod {x^3} \nonumber\\
		=&\frac{(x^4+x^6+x^{10}+x^{12})/x^4}{(x^4+x^5)/x^4}\pmod {x^3} \nonumber\\ 
		=&\frac{1+x^2}{1+x}\pmod {x^3}\nonumber\\
		\overset{(d)}=&(1+x^2)(1+x+x^2)\pmod {x^3} \nonumber\\
		=&1+x \pmod {x^3},
	\end{align*}
	where (d) holds since $(1+x)(1+x+x^2)=1$ in $F_2[x]/x^3$, thus the secret is $110$, which is correct.
	
	\section{Construction of the Evolving $k$-threshold Secret Sharing Scheme on A Polynomial Ring $F_p[x]$}\label{sec5}
	As described in the prior sections, based on binary prefix coding, we have proposed the constructions of evolving $k$-threshold secret sharing scheme in $F_2[x]$, where $k\geq 2$. Based on $p$-ary prefix coding, we consider the secret $s\in\{0,1,\cdots,p-1\}^{\ell}$ for any $p\in \mathbb{N}^+$, and we extend the proposed evolving $k$-threshold secret sharing scheme to $F_p[x]$.
	
	\subsection{Proposed Scheme}
	Given a set of $p$-ary prefix codes for positive integers, the codeword of the integer $i$ is denoted as $c_i=(c_{i,0},c_{i,1},\cdots,c_{i,\ell_i-1})$, where $c_{i,j}\in F_p$ for $0\leq j\leq \ell_i-1$ and $\ell_i$ denotes the codeword length of $c_i$. The polynomial form of $c_i$ is defined as $y_i=\sum_{j=0}^{j=\ell_i-1}c_{i,j}x^j\in F_p[x]$. For any $i\in \mathbb{N}^+$, then the $i$-th share with the algorithm $\mathcal{E}$ is defined as
	\begin{equation}\label{key466}
		Z^{(s)}_i=\sum_{j=0}^{k-2}r_j{y_i}^j+s{y_i}^{k-1}\pmod {x^{(\ell_i-1)(k-1)+\ell}},
	\end{equation}
	where $r_j$ is randomly chosen from $F_p[[x]]$ for $0\leq j\leq k-2$. %The actual value situation of each $r_j$ has been shown in Subsection~\ref{sec4.1}.  
	
	Given any $k$ participants $P_{i_1}, P_{i_2},\cdots, P_{i_k}\in \mathcal{P}$ with $i_1<\cdots<i_{k}$, let $L_{i_u,i_v}$ denote the maximal integer of $y_{i_u}-y_{i_v}$ satisfying $x^{L_{i_u,i_v}}\mid (y_{i_u}-y_{i_v})$, for any $u,v\in[k]$ with  $u\neq v$. Let $\alpha=\min_{m=1}^{k}\{L_{i_m}+\sum_{\substack{1\leq u<v \leq k\\u,v\neq m}} L_{i_u,i_v}\}$, where $L_{i_m}=(\ell_{i_m}-1)(k-1)+\ell$ for $1\leq m\leq k$.
	The algorithm $\mathcal{R}$ finds out that the following equation
	\begin{equation}
		s\Big(\prod_{1\leq u<v \leq k} (y_{i_u}-y_{i_v}) \Big)=\sum_{m=1}^{k}(-1)^{m-1}\prod_{\substack{1\leq u<v \leq k\\u,v\neq m}} (y_{i_u}-y_{i_v})Z^{(s)}_{i_m}\pmod {x^\alpha}
	\end{equation}
	has a unique solution of $s$ in $F_p[x]/x^\ell$ as
	\begin{align}\label{key4711}
		s=\frac{\sum_{m=1}^{k}(-1)^{m-1}\prod_{\substack{1\leq u<v \leq k\\u,v\neq m}} (y_{i_u}-y_{i_v})Z^{(s)}_{i_m}/x^{\sum_{1\leq u<v \leq k}L_{i_u,i_v}}}{\prod_{1\leq u<v \leq k} (y_{i_u}-y_{i_v})/x^{\sum_{1\leq u<v \leq k}L_{i_u,i_v}}}\pmod {x^\ell}.	
	\end{align}
	
	\noindent\textbf{Example.} In this example, we choose $k=3$, $p=3$ the secret $s=2101$ with $\ell=4$. For the three participants $P_2$, $P_5$ and $P_8$, let $01$, $102$ and $112$ be the prefix codes of $2$, $5$ and $8$, respectively. As $\ell+(k-1)(\ell_8-1)=8$, the algorithm $\mathcal{E}$ randomly chooses two $8$-bit binary strings $r_0=01201200$ and $r_1=20100010$, then the share of $P_2$ is given by
	\begin{align*}
		Z^{(s)}_2=&r_0+r_1y_2+s{y_2}^2\nonumber\\
		=&(x+2x^2+x^4+2x^5)+(2+x^2)x+(2+x+x^3)x^2\nonumber\\
		=&x^2+2x^3+x^4 \pmod {x^6},
	\end{align*}
	thus, the share $Z^{(s)}_2$ is $001210$.
	
	The share of $P_5$ is calculated by
	\begin{align*}
		Z^{(s)}_5=&r_0+r_1y_5+s{y_5}^2\nonumber\\
		=&(x+2x^2+x^4+2x^5)+(2+x^2+x^6)(1+2x^2)+(2+x+x^3)(1+2x^2)^2\nonumber\\ 
		=&1+2x+2x^3+2x^4+x^5+x^6+x^7 \pmod {x^8},
	\end{align*}
	thus, the share $Z^{(s)}_5$ is $12022111$.
	
	And the share of $P_8$ is given by
	\begin{align*}
		Z^{(s)}_8=&r_0+r_1y_8+s{y_8}^2\nonumber\\
		=&(x+2x^2+x^4+2x^5)+(2+x^2+x^6)(1+x+2x^2)\nonumber\\
		&+(2+x+x^3)(1+x+2x^2)^2\nonumber\\
		=&1+2x+x^2+2x^4+2x^5+2x^6+2x^7 \pmod {x^8},
	\end{align*}
	thus, the share $Z^{(s)}_8$ is $12102222$.
	
	Let $01$, $102$ and $112$ be the prefix codes of $2$, $5$ and $8$, respectively. Let  $001210$,  $12022111$ and $12102222$ respectively be the shares of the three participants $P_2$, $P_5$, $P_8$. Then we have 
	\begin{equation*}
		\begin{aligned}
			\left
			\{
			\begin{array}{l}
				Z^{(s)}_{P_2}=x^2+2x^3+x^4 \pmod {x^6},\\
				Z^{(s)}_{P_5}=1+2x+2x^3+2x^4+x^5+x^6+x^7  \pmod {x^8},\\
				Z^{(s)}_{P_8}=1+2x+x^2+2x^4+2x^5+2x^6+2x^7\pmod {x^{8}}.\\
			\end{array} \right.
		\end{aligned}
	\end{equation*}
	As $(y_2-y_5)(y_2-y_8)(y_5-y_8)=x(2+x+2x^2+2x^3+2x^4)$ and the bit length of the secret $s$ is $4$, the algorithm $\mathcal{R}$ reconstructs the secret $s$ as
	\begin{align*}
		s=&\frac{\sum_{m=1}^{3}(-1)^{m-1}\prod_{\substack{1\leq u<v \leq 3\\u,v\neq m}} (y_{i_u}-y_{i_v})Z^{(s)}_{i_m}/x^{\sum_{1\leq u<v \leq 3}L_{i_u,i_v}}}{\prod_{1\leq u<v \leq 3} (y_{i_u}-y_{i_v})/x^{\sum_{1\leq u<v \leq 3}L_{i_u,i_v}}}\pmod {x^4}\nonumber\\
		=&\frac{[Z^{(s)}_2(y_5-y_8)-Z^{(s)}_5(y_2-y_8)+Z^{(s)}_8(y_2-y_5)]/x^{\sum_{1\leq u<v \leq 3}L_{i_u,i_v}}}{(y_2-y_5)(y_2-y_8)(y_5-y_8)/x^{\sum_{1\leq u<v \leq 3}L_{i_u,i_v}}}\pmod {x^4}\nonumber\\	
		=&\frac{x(1+x+2x^2+2x^3+x^4+x^5+2x^6+x^8)/x}{x(2+x+2x^2+2x^3+2x^4)/x}\pmod {x^4}\nonumber\\ 
		=&\frac{1+x+2x^2+2x^3}{2+x+2x^2+2x^3}\pmod {x^4}\nonumber\\
		=&(1+x+2x^2+2x^3)(2+x+2x^2+2x^3)^{-1} \pmod {x^4}\nonumber\\
		=&(1+x+2x^2+2x^3)(2+2x+2x^3)\pmod {x^4} \nonumber\\
		=&2+x+x^3 \pmod {x^4},
	\end{align*}
	thus the secret is $2101$.
	
	\subsection{Proofs of Correctness and Secrecy}
	We will prove the correctness and secrecy of the proposed scheme in this subsection.
	
	\noindent\textbf{The proof of Correctness.} For any $t\in \mathbb{N}^+, A \in\mathcal{A}_t$, we will prove that the secret $s$ can be correctly recovered by the shares of the participants in $A$. Similarly, we only need to consider the case of $|A|=k$ when $|A|\geq k$. 
	
	Using the similar proof method proposed in Subsection~\ref{sec4.2}, we just make slight modifications to the proof.  Denote the $k$ elements in $A$ as $P_{i_1}$, $P_{i_2}$, $\cdots$, $P_{i_k}$ with $i_1<\cdots<i_k\leq t$, the corresponding shares are as follows.
	\begin{equation}\label{key468}
		\begin{aligned}
			\left
			\{
			\begin{array}{c}
				Z^{(s)}_{i_1}=\sum_{j=0}^{k-2}r_j{y_{i_1}}^j+s{y_{i_1}}^{k-1}\pmod {x^{(\ell_{i_1}-1)(k-1)+\ell}},\\
				Z^{(s)}_{i_2}=\sum_{j=0}^{k-2}r_j{y_{i_2}}^j+s{y_{i_2}}^{k-1}\pmod {x^{(\ell_{i_2}-1)(k-1)+\ell}},\\
				\cdots\\
				Z^{(s)}_{i_k}=\sum_{j=0}^{k-2}r_j{y_{i_k}}^j+s{y_{i_k}}^{k-1}\pmod {x^{(\ell_{i_k}-1)(k-1)+\ell}}.\\
			\end{array} \right.
		\end{aligned}
	\end{equation} 
	
	Considering the congruence equations in (\ref{key468}), there exists $h_m(x)\in F_p[x]$ satisfying the equation 
	
	\begin{equation}\label{key469}
		\sum_{j=0}^{k-2}r_j{y_{i_m}}^j+s{y_{i_m}}^{k-1}-Z^{(s)}_{i_m}=h_m(x)x^{(\ell_{i_m}-1)(k-1)+\ell}.	
	\end{equation}
	Then, we multiply both sides of the equation (\ref{key469}) by $H_{i_m}(x)$, where $H_{i_m}(x)= (-1)^{m-1}{\prod_{\substack{1\leq u<v \leq k\\u,v\neq m}} (y_{i_u}-y_{i_v})}$. Performing the above steps for each congruent equation in (\ref{key468}), then, we have
	\begin{equation}\label{key4691}
		\begin{aligned}
			\left
			\{
			\begin{array}{c}
				H_{i_1}(x)\Big(\sum_{j=0}^{k-2}r_j{y_{i_1}}^j+s{y_{i_1}}^{k-1}-Z^{(s)}_{i_1}\Big)=H_{i_1}(x)h_1(x){x^{(\ell_{i_1}-1)(k-1)+\ell}},\\
				H_{i_2}(x)\Big(\sum_{j=0}^{k-2}r_j{y_{i_2}}^j+s{y_{i_2}}^{k-1}-Z^{(s)}_{i_2}\Big)
				=H_{i_2}(x)h_2(x)x^{(\ell_{i_2}-1)(k-1)+\ell},\\
				\cdots\\
				H_{i_k}(x)\Big(\sum_{j=0}^{k-2}r_j{y_{i_k}}^j+s{y_{i_k}}^{k-1}-Z^{(s)}_{i_k}\Big)=H_{i_k}(x)h_k(x)x^{(\ell_{i_k}-1)(k-1)+\ell}.\\
			\end{array} \right.
		\end{aligned}	
	\end{equation}
	Summing these equations, we can further get
	\begin{align}\label{key4692}
		&\sum_{m=1}^{k}H_{i_m}(x){y_{i_m}}^{k-1}s+\sum_{j=0}^{k-2} \sum_{m=1}^{k}H_{i_m}(x){y_{i_m}}^{j}r_j\nonumber\\
		=& \sum_{m=1}^{k}H_{i_m}(x)Z^{(s)}_{i_m}+\sum_{m=1}^{k}H_{i_m}(x)h_m(x)x^{(\ell_{i_m}-1)(k-1)+\ell}.
	\end{align}
	Consider the coefficient of $s$ as $\sum_{m=1}^{k}H_{i_m}(x){y_{i_m}}^{k-1}$ in $F_p[x]$, which is equal to the following Vandermonde determinant, i.e.
	\begin{align}\label{key4693}
		\sum_{m=1}^{k}H_{i_m}(x){y_{i_m}}^{k-1}
		=\left |\begin{array}{ccccc}
			1 &1   &\cdots &1\\
			y_{i_1} &y_{i_2} &\cdots &y_{i_{k}}\\
			\vdots &\vdots &\ddots &\vdots\\
			{y_{i_1}}^{k-1} & {y_{i_2}}^{k-1} &\cdots &{y_{i_{k}}}^{k-1} \\
		\end{array}\right|=\prod_{1\leq u<v \leq k} (y_{i_u}-y_{i_v}).
	\end{align}
	
	For arbitrary $j$ with $0\leq j\leq k-2$, the coefficient of $r_j$ is $\sum_{m=1}^{k}H_{i_m}(x){y_{i_m}}^{j}$ in $F_p[x]$, then we have
	\begin{equation}\label{key4694}
		\sum_{m=1}^{k}H_{i_m}(x){y_{i_m}}^{j}
		=\left |\begin{array}{ccccc}
			1 &1   &\cdots &1\\
			y_{i_1} &y_{i_2} &\cdots &y_{i_{k}}\\
			\vdots &\vdots &\ddots &\vdots\\
			{y_{i_1}}^{k-2} & {y_{i_2}}^{k-2} &\cdots &{y_{i_{k}}}^{k-2} \\
			{y_{i_1}}^{j} & {y_{i_2}}^{j} &\cdots &{y_{i_{k}}}^{j} \\
		\end{array}\right|=0.
	\end{equation} 
	Taking the results of \eqref{key4693} and \eqref{key4694} into \eqref{key4692},  we can further get
	\begin{align}\label{key4695}
		\prod_{1\leq u<v \leq k} (y_{i_u}-y_{i_v})s=\sum_{m=1}^{k}H_{i_m}(x)Z^{(s)}_{i_m}
		+\sum_{m=1}^{k}H_{i_m}(x)h_m(x)x^{(\ell_{i_m}-1)(k-1)+\ell}.
	\end{align}
	As $\alpha=\min_{m=1}^{k}\{L_{i_m}+\sum_{\substack{1\leq u<v \leq k\\u,v\neq m}} L_{i_u,i_v}\}$, where $L_{i_m}=(\ell_{i_m}-1)(k-1)+\ell$ for $1\leq m\leq k$, then 
	that each polynomial
	\begin{equation*}
		H_{i_m}(x)h_m(x)x^{L_{i_m}}={\prod_{\substack{1\leq u<v \leq k\\u,v\neq m}}}(y_{i_u}-y_{i_v})x^{L_{i_m}}h_m(x)
	\end{equation*}
	has the factor $x^\alpha$ for any $m$. Therefore, \eqref{key4695} can be derived into
	\begin{equation}\label{key470}
		\prod_{1\leq u<v \leq k} (y_{i_u}-y_{i_v})s\equiv \sum_{m=1}^{k}(-1)^{m-1}\prod_{\substack{1\leq u<v \leq k\\u,v\neq m}} (y_{i_u}-y_{i_v})Z^{(s)}_{i_m}\pmod {x^{\alpha}}. 
	\end{equation}
	
	As $\prod_{1\leq u<v \leq k} (y_{i_u}-y_{i_v})$ has the maximal integer $\sum_{1\leq u<v \leq k}L_{i_u,i_v}$, and the bit length of $s$ is $\ell$, we get $\sum_{1\leq u<v \leq k}L_{i_u,i_v}+\ell\leq \alpha$ (the proof is proposed in Subsection~\ref{sec4.2}). Using Theorem~\ref{thm3}, we can reconstruct the unique solution of $s$ in $F_p[x]/x^\ell$, which can be written as
	\begin{align}\label{key471}
		s=\frac{\sum_{m=1}^{k}(-1)^{m-1}\prod_{\substack{1\leq u<v \leq k\\u,v\neq m}} (y_{i_u}-y_{i_v})Z^{(s)}_{i_m}/x^{\sum_{1\leq u<v \leq k}L_{i_u,i_v}}}{\prod_{1\leq u<v \leq k} (y_{i_u}-y_{i_v})/x^{\sum_{1\leq u<v \leq k}L_{i_u,i_v}}}\pmod {x^\ell}.	
	\end{align}
	
	As for the security of the proposed scheme, a similar proof method has been mentioned in Subsection~\ref{sec4.2}. Therefore, we will no longer describe it here.
	
	\subsection{Construction Based on Two $p$-ary Prefix Codes}
	For the proposed scheme over $F_p[x]$, the $t$-th participant's share $Z^{(s)}_t$ can be regarded as a polynomial of $x$ over $F_p$. It can also be regarded as a finite symbol over $F_p$. Therefore, we denote by $D(Z^{(s)}_t)$ the number of symbols for $Z^{(s)}_t$, then we have
	\begin{equation}
		D(Z^{(s)}_t)=(k-1)(\ell_t-1)+\ell,	
	\end{equation}
	where $\ell_t$ is the codeword length of using $p$-ary prefix code to encode positive integer $t$. If choosing different $p$-ary prefix codes, the corresponding $D(Z^{(s)}_t)$ may be different. 
	
	As described in Subsection~\ref{sub3.3}, we have shown two binary prefix coding named $\gamma$ code and $\delta$ code. Now, we introduce two $p$-ary prefix codes named $M_1$ code and $M_2$ code. $M_1$ code is represented by 
	\begin{equation*}
		M_1(t)=\overbrace{0,0,\cdots,0}^{\lfloor \log_p {t}\rfloor}[t]_p,
	\end{equation*} 
	where $[t]_p$ is the $p$-ary expression. Thus, we have 
	\begin{equation}
		L(M_1(t))=2\lfloor\log_p t\rfloor+1.
	\end{equation}
	Hence, if using $M_1$ code as the prefix code, we have
	\begin{equation}
		D(Z^{(s)}_t)=2(k-1)\lfloor\log_p t\rfloor+\ell.		
	\end{equation}
	$M_2$ code is given by
	\begin{equation*}
		M_2(t)=M_1(\lfloor \log_p {t}\rfloor+1)[t]_p.
	\end{equation*}
	Then the codeword length of $M_2(t)$ is given by
	\begin{equation}
		L(M_2(t))=\lfloor \log_p {t}\rfloor+2\lfloor \log_p (\lfloor \log_p {t}\rfloor+1)\rfloor+2.
	\end{equation}
	Therefore, if using $M_2$ code as the $p$-ary prefix code, we have
	\begin{equation}
		D(Z^{(s)}_t)=(k-1)(\lfloor \log_p {t}\rfloor)+2(k-1)(\lfloor \log_p (\lfloor \log_p {t}\rfloor+1)\rfloor)+(k-1)+\ell.		
	\end{equation}
	
	\begin{table}[t]
		\renewcommand\arraystretch{1}
		\centering
		\caption{Comparision of evolving $k$-threshold schemes based on $p$-ary prefix coding.}
		\setlength{\tabcolsep}{0.9mm}
		\begin{tabular}{|c|c|c|}
			\hline
			threshold &algorithm &$D(Z^{(s)}_t)$\\\hline
			\multirow{2}*{$k=2$}
			&~\cite{okamura2020new}&$(\lfloor \log_p {t}\rfloor+2\lfloor \log_p (\lfloor \log_p {t}\rfloor+1)\rfloor+2)\cdot\max\{\lceil \lg {(p+1)}\rceil,\ell\}$\\
			\cline{2-3}
			~& \textbf{ours}& \bm {${\lfloor \log_p {t}\rfloor+2\lfloor \log_p (\lfloor \log_p {t}\rfloor+1)\rfloor+1+\ell}$}\\
			\hline
			\multirow{2}*{$k\geq 3$}
			&none&/\\
			\cline{2-3}
			~&\textbf{ours}&\bm {${(k-1)(\lfloor \log_p {t}\rfloor+2\lfloor \log_p (\lfloor \log_p {t}\rfloor+1)\rfloor+1)+\ell}$}\\
			\hline				
		\end{tabular}
		\label{tab3}
	\end{table} 

	We tabulate the currently known evolving threshold secret sharing schemes which are based on $p$-ary prefix coding, and compare the corresponding $D(Z^{(s)}_t)$. For the proposed scheme, we analyze the corresponding $D(Z^{(s)}_t)$ for using $M_2$ code as the $p$-ary prefix code. In Table~\ref{tab3}, we also represent the value of the lowest $D(Z^{(s)}_t)$ in bold for each case of $k$. When $k=2$, the proposed scheme's $D(Z^{(s)}_t)$ is lower than the scheme in \cite{okamura2020new}. When $k\geq3$, there are no other schemes based on $p$-ary prefix coding. However, the proposed construction is applicable to any $k$. 
	
	\section{Conclusion and Discussion}\label{sec6}
	In this paper, based on the prefix coding, we proposed the algebraic-oriented constructions of evolving $k$-threshold schemes for an $\ell$-bit secret over a polynomial ring. Specifically, we first proposed the evolving $2$-threshold secret sharing scheme on $F_2[x]$ for binary secret, and then we extended the scheme to the evolving $k$-threshold scheme on $F_2[x]$. Finally, we gave a construction of an evolving $k$-threshold scheme over a polynomial ring $F_p[x]$ considering the secret $s\in\{0,1,\cdots,p-1\}^{\ell}$. The proposed schemes can establish the connection between prefix codes and the evolving schemes for $k\geq2$, and also the first evolving $k$-threshold secret sharing scheme by generalizing Shamir's scheme onto a polynomial ring. Besides, when $k=2$, the proposed scheme can be unified to describe all known evolving $2$-threshold secret sharing schemes which are based on prefix codes. In addition, we show that the share size of the $t$-th share is $(k-1)(\ell_t-1)+\ell$, where $\ell_t$ denotes the codeword length of the integer $t$ encoded by the given binary prefix code. When $\delta$ code is applied, the $t$-th share size is $(k-1)\lfloor\lg t\rfloor+2(k-1)\lfloor\lg ({\lfloor\lg t\rfloor+1}) \rfloor+\ell$, which is smaller than Komargodski et al's scheme, when $k>3$ and $\ell>1$. 
	
	There are still some thought-provoking and challenging issues that remain unresolved, which are also our future works.
	
	1) The correctness and security of the proposed schemes are perfect. If relaxing correctness or security, is it possible to propose more interesting and efficient schemes to achieve better efficiency or lower share size?
	
	2) Compared with evolving $3$-threshold scheme given in~\cite{DARCO2021149}, the share size of the proposed scheme is larger. Whether the proposed scheme be further improved to achieve a smaller share size? %If not, check whether the scheme~\cite{d2021secret} be extended for other cases. 
	
	\section{Appendices}
	\textbf{Proof of Theorem 4.}
	We will prove the conclusion of Theorem 4 by contradiction. Without losing generality, we assume that the congruence equation (\ref{key356}) has $N_0$ solutions and the congruence equation (\ref{key357}) has $N_1$ solutions with $N_0> N_1$. Let $\mathbb{R}_0$ and $\mathbb{R}_1$ respectively be the solution spaces of the congruence equations (\ref{key356}) and (\ref{key357}). According to the definition of congruence equations in (\ref{key356}), there exists $w_{0,j}$ satisfying following equation  
	\begin{equation}\label{key584}
		\begin{aligned}
			\left
			\{
			\begin{array}{l}
				z_{i_1}=\sum_{j=0}^{k-2}r_{0,j}{y_{i_1}}^j+s_0{y_{i_1}}^{k-1}+w_{0,1}x^{L_{i_1}},\\
				z_{i_2}=\sum_{j=0}^{k-2}r_{0,j}{y_{i_2}}^j+s_0{y_{i_2}}^{k-1}+w_{0,2}x^{L_{i_2}},\\
				\hspace{10em}\cdots\hspace{10em}\\
				z_{i_{k-1}}=\sum_{j=0}^{k-2}r_{0,j}{y_{i_{k-1}}}^j+s_0{y_{i_{k-1}}}^{k-1}+w_{0,k-1}x^{L_{i_{k-1}}},\\
			\end{array} \right.
		\end{aligned}
	\end{equation}
	where $w_{0,m}\in F_p[x]/x^{(\ell_{i_{k-1}}-1)(2k-3)+\ell-L_{i_m}}$ for $1\leq m \leq k-1$.  
	
	Correspondingly, there exists $w_{1,m}\in F_p[x]/x^{(\ell_{i_{k-1}}-1)(2k-3)+\ell-L_{i_m}}$ to convert the	congruence equation (\ref{key357}) into the following form 
	\begin{equation}\label{key585}
		\begin{aligned}
			\left
			\{
			\begin{array}{l}
				z_{i_1}=\sum_{j=0}^{k-2}r_{1,j}{y_{i_1}}^j+s_1{y_{i_1}}^{k-1}+w_{1,1}x^{L_{i_1}},\\
				z_{i_2}=\sum_{j=0}^{k-2}r_{1,j}{y_{i_2}}^j+s_1{y_{i_2}}^{k-1}+w_{1,2}x^{L_{i_2}},\\
				\hspace{10em}\cdots\hspace{10em}\\
				z_{i_{k-1}}=\sum_{j=0}^{k-2}r_{1,j}{y_{i_{k-1}}}^j+s_1{y_{i_{k-1}}}^{k-1}+w_{1,k-1}x^{L_{i_{k-1}}}.\\
			\end{array} \right.
		\end{aligned}
	\end{equation} 
	The above equations (\ref{key584}) and (\ref{key585}) do not change the number of solutions of the congruence equations (\ref{key356}) and (\ref{key357}), respectively. It means that the number of solutions $(r_{0,0}, r_{0,1},\cdots, r_{0,k-2},w_{0,1}, w_{0,2},\cdots, w_{0,k-1})$ of (\ref{key584}) and the number of solutions of (\ref{key356}) are same, the numbers of solutions of (\ref{key585}) and (\ref{key357}) are same. Next, we directly analyze the solution cases of the equations (\ref{key584}) and (\ref{key585}). According to the assumptions, the congruence equation (\ref{key356}) has solutions. As $\mathbb{R}_0$ is the solution space of the congruence equation (\ref{key356}), we choose one of solutions in $\mathbb{R}_0$ and denote it as $R^1_0=(r^1_{0,0}, r^1_{0,1},\cdots, r^1_{0,k-2})$, then $R^1_{0,+}=(r^1_{0,0}, r^1_{0,1},\cdots, r^1_{0,k-2}, w_{0,1}, w_{0,2},\cdots, w_{0,k-1})$ is the solution of the equation (\ref{key584}), where $w_{0,m}=(\sum_{j=0}^{k-2}r^1_{0,j}{y_{i_m}}^j+s_0{y_{i_m}}^{k-1}-z_{i_m})/x^{L_{i_m}}$ for $1\leq m\leq k-1$.
	
	Consider the quotient ring $F_p[x]/x^{(\ell_{i_{k-1}}-1)(2k-3)+\ell}$. Subtracting the equation (\ref{key585}) from the equation (\ref{key584}), then we have
	\begin{equation}\label{key586}
		\begin{aligned}
			\left
			\{
			\begin{array}{l}
				(s_0-s_1){y_{i_1}}^{k-1}=\sum_{j=0}^{k-2}r_j{y_{i_1}}^j+w_1x^{L_{i_1}},\\
				(s_0-s_1){y_{i_2}}^{k-1}=\sum_{j=0}^{k-2}r_j{y_{i_2}}^j+w_2x^{L_{i_2}},\\
				\hspace{10em}\cdots\hspace{10em}\\
				(s_0-s_1){y_{i_{k-1}}}^{k-1}=\sum_{j=0}^{k-2}r_j{y_{i_{k-1}}}^j+w_{k-1}x^{L_{i_{k-1}}},\\
			\end{array} \right.
		\end{aligned}
	\end{equation}
	where $r_j=r_{1,j}-r_{0,j}$ for $0\leq j\leq k-2$ and $w_m=w_{1,m}-w_{0,m}$ for $1\leq m\leq k-1$.
	
	Suppose there exists a solution $R=(r_0,\cdots,r_{k-2},w_1,\cdots,w_{k-1})$ of the equation (\ref{key586}).
	Then combine $R$ with the solution $R^1_{0,+}$ of the equation (\ref{key584}), and we assert that $R^1_{0,+}+R$ is the solution of the equation (\ref{key585}). Let $R'=(r_0,\cdots,r_{k-2})$,
	then, $R^1_{0}+R'=(r^1_{0,0}+r_0, r^1_{0,1}+r_1,\cdots, r^1_{0,k-2}+r_{k-2})$ is a solution of the equation (\ref{key357}). When $R^1_{0}$ is taken across the whole solution space $\mathbb{R}_0$, the congruence
	equation (\ref{key357}) has at least $N_0$ different solutions, which contradicts that (\ref{key357}) has $N_1$ solutions.  Therefore, the equations (\ref{key356}) and (\ref{key357}) have the same number of solutions.  Next, our goal is to find the solution of the equation (\ref{key586}).
	
	We consider the linear equations
	\begin{equation}\label{key5861}
		\begin{aligned}
			\left
			\{
			\begin{array}{l}
				(s_0-s_1){y_{i_1}}^{k-1}=\sum_{j=0}^{k-2}r_j{y_{i_1}}^j,\\
				(s_0-s_1){y_{i_2}}^{k-1}=\sum_{j=0}^{k-2}r_j{y_{i_2}}^j,\\
				\hspace{10em}\cdots\hspace{10em}\\
				(s_0-s_1){y_{i_{k-1}}}^{k-1}=\sum_{j=0}^{k-2}r_j{y_{i_{k-1}}}^j,\\
			\end{array} \right.
		\end{aligned}
	\end{equation}
	which can be written as 
	\begin{equation}\label{key5682}
		\left (\begin{array}{ccccc}
			1 &y_{i_1}   &\cdots &{y_{i_{k}}}^{k-2}\\
			1 &y_{i_2} &\cdots &{y_{i_{2}}}^{k-2}\\
			\vdots &\vdots &\ddots &\vdots\\
			1 & y_{i_{k-1}} &\cdots &{y_{i_{k-1}}}^{k-2} \\
		\end{array}\right)
		\left (\begin{array}{cc}
			r_0\\
			r_1\\
			\vdots\\
			r_{k-2} \\
		\end{array}\right)=(s_0-s_1)\left (\begin{array}{cc}
			{y_{i_1}}^{k-1}\\
			{y_{i_2}}^{k-1}\\
			\vdots\\
			{y_{i_{k-1}}}^{k-1} \\
		\end{array}\right).
	\end{equation}
	Let $A=\left (\begin{array}{ccccc}
		1 &y_{i_1}   &\cdots &{y_{i_{k}}}^{k-2}\\
		1 &y_{i_2} &\cdots &{y_{i_{2}}}^{k-2}\\
		\vdots &\vdots &\ddots &\vdots\\
		1 & y_{i_{k-1}} &\cdots &{y_{i_{k-1}}}^{k-2} \\
	\end{array}\right)$, $r=\left (\begin{array}{cc}
		r_0\\
		r_1\\
		\vdots\\
		r_{k-2} \\
	\end{array}\right)$ and $b=(s_0-s_1)\left (\begin{array}{cc}
		{y_{i_1}}^{k-1}\\
		{y_{i_2}}^{k-1}\\
		\vdots\\
		{y_{i_{k-1}}}^{k-1} \\
	\end{array}\right)$.
	We observe that $A$ is Vandermonde matrix.  Since ${y_{i_m}}$ with $1\leq m\leq k-1$ is the polynomial form of the prefix code, then $rank(A)=rank(A,b)$. Hence,  (\ref{key5682}) has solutions. Call that a fact, for a monic polynomial $f(y)=y^{k-1}+\sum_{i=1}^{k-1}a_iy^{k-1-i}$ with known $k-1$ roots $(y_1,y_2,\cdots,y_{k-1})$, then the $k-1$ coefficients $(a_1,a_2,\cdots,a_{k-1})$ are uniquely determined. By the formulas of the relation between roots and coefficients, we can calculate
	$a_i=(-1)^i\sigma_i(y_1,y_2,\cdots,y_{k-1})=\sum_{\substack{C\subseteq[k-1]\\|C|=i}}\prod_{g\in C} y_{g}$. Therefore, we can obtain a solution about $r$ in (\ref{key5682}), i.e. 
	\begin{align}\label{key5683}
		r_j=&(-1)^{k-2-j}(s_0-s_1)\sigma_{k-1-j}({y_{i_1}},{y_{i_2}},\cdots,{y_{i_{k-1}}})
		\nonumber\\
		=&(-1)^{k-2-j}(s_0-s_1)\sum_{\substack{C\subseteq[k-1]\\|C|=k-1-j}}{\prod_{g\in C} y_{i_g}}	
	\end{align}
	for $0\leq j\leq k-2$.
	
	From above, we can construct a solution of the equation (\ref{key586}) as
	\begin{equation}\label{key587}
		R=(\{r_j=(-1)^{k-2-j}(s_0-s_1)\sum_{\substack{C\subseteq[k-1]\\|C|=k-1-j}}{\prod_{g\in C} y_{i_g}}\}_{j=0}^{k-2}, \{w_m=0\}_{m=1}^{k-1}).
	\end{equation}

%\IEEEtriggeratref{8}
\bibliographystyle{IEEEtran} 
\bibliography{secret}
\end{document}